\documentclass[a4paper,10pt]{article}
\usepackage[mathlines]{lineno}
\usepackage[bbgreekl]{mathbbol}
\usepackage{soul}
\usepackage[small]{titlesec}
\usepackage[utf8]{inputenc}
\usepackage[T1]{fontenc}  
\usepackage[]{mdframed}
\usepackage{mathrsfs}
\usepackage{caption}
\usepackage[noerroretextools,backend=biber,style=alphabetic,natbib=false,giveninits=true,doi=true,isbn=false,url=false,date=year,maxbibnames=99,sorting=nty,defernumbers=true]{biblatex}
\DeclareNameAlias{default}{family-given}

\DeclareFieldFormat[article]{title}{#1}
\renewbibmacro{in:}{}
\DeclareFieldFormat[article]{pages}{#1}
\DeclareFieldFormat{journal}{#1}
\renewcommand\bf\bfseries

\renewbibmacro*{volume+number+eid}{%
	\printfield{volume}%
	\setunit*{\addnbspace}% NEW (optional); there's also \addnbthinspace
	\printfield{number}%
	\setunit{\addcomma\space}%
	\printfield{eid}}
\DeclareFieldFormat[article]{number}{\mkbibparens{#1}}
\DeclareFieldFormat[article]{volume}{\textbf{#1}}
\DeclareFieldFormat{year}{\mkbibparens{#1}}
\DeclareBibliographyDriver{article}{%
	\printnames{author}:%
	\newunit\newblock
	\printfield{title}%
	\newunit\newblock
	\printfield{journaltitle}
	\newunit
	\iffieldundef{number}{\printfield{volume}}{\printfield{volume}\addspace\printfield{number}}
	\addcomma\addspace\printfield{pages}\addspace
	\printfield{year}
}
\AtEveryBibitem{\clearfield{month}}
\AtEveryBibitem{\clearfield{day}}
\usepackage[english]{babel}
\usepackage[T1]{fontenc}
\usepackage{csquotes}
\usepackage{bbm}
\usepackage[leqno]{amsmath}
\usepackage{amsfonts,amsthm,amsbsy,amssymb,dsfont,stmaryrd}
\usepackage{aliascnt}
\usepackage[dvipsnames]{xcolor}
\usepackage{braket}

\usepackage{caption}
\usepackage{subcaption}

\usepackage{enumitem}
\usepackage{environ}

\makeatletter
\newcommand{\leqnomode}{\tagsleft@true\let\veqno\@@leqno}
\newcommand{\reqnomode}{\tagsleft@false\let\veqno\@@eqno}
\makeatother

\usepackage{mathtools}
\usepackage[makeroom]{cancel}
\numberwithin{equation}{section}

\newcommand\myshade{85}
\colorlet{mylinkcolor}{violet}
\colorlet{mycitecolor}{YellowOrange}
\colorlet{myurlcolor}{Aquamarine}

\usepackage[left=2.5cm,right=2.5cm,top=2.5cm,bottom=2.5cm]{geometry}
\usepackage[unicode=true,pdfusetitle,
bookmarks=true,bookmarksnumbered=false,bookmarksopen=false,
breaklinks=false,pdfborder={0 0 1},backref=false,
linkcolor  = mylinkcolor!\myshade!black,
citecolor  = mycitecolor!\myshade!black,
urlcolor   = myurlcolor!\myshade!black,
colorlinks = true,
]
{hyperref}

\usepackage{ifthen}

\usepackage{graphicx}
\usepackage{caption}
\usepackage{slashed}
\definecolor{ct_black}{HTML}{000000}
\definecolor{ct_orange}{HTML}{ED872D}
\definecolor{ct_purple}{HTML}{7A68A6}
\definecolor{ct_blue}{HTML}{348ABD}
\definecolor{ct_turquoise}{HTML}{188487}
\definecolor{ct_red}{HTML}{E32636}
\definecolor{ct_pink}{HTML}{CF4457}
\definecolor{ct_green}{HTML}{467821}

\definecolor{ct2_green}{HTML}{9FF781}
\definecolor{ct2_green_dark}{HTML}{088A08}

\theoremstyle{plain}
\newtheorem{thm}{\protect\theoremname}[section]

\newaliascnt{lem}{thm}
\theoremstyle{plain}
\newtheorem{lem}[lem]{\protect\lemmaname}
\aliascntresetthe{lem}

\newaliascnt{cor}{thm}
\theoremstyle{plain}

\aliascntresetthe{cor}

\newaliascnt{prop}{thm}
\theoremstyle{plain}
\newtheorem{prop}[prop]{\protect\propositionname}
\aliascntresetthe{prop}

\newaliascnt{claim}{thm}
\theoremstyle{plain}
\newtheorem{claim}[claim]{\protect\claimname}
\aliascntresetthe{claim}

\newaliascnt{assumption}{thm}

\aliascntresetthe{assumption}

\theoremstyle{remark}
\newaliascnt{rem}{thm}

\aliascntresetthe{rem}

\theoremstyle{definition}
\newaliascnt{defn}{thm}

\aliascntresetthe{defn}

\newaliascnt{example}{thm}

\aliascntresetthe{example}
\providecommand{\assumptionname}{Assumption}
\providecommand{\claimname}{Claim}
\providecommand{\corollaryname}{Corollary}
\providecommand{\definitionname}{Definition}
\providecommand{\lemmaname}{Lemma}
\providecommand{\propositionname}{Proposition}
\providecommand{\remarkname}{Remark}
\providecommand{\theoremname}{Theorem}
\providecommand{\examplename}{Example}

\usepackage[nameinlink]{cleveref}

\crefname{section}{Section}{Sections}
\crefname{example}{Example}{Examples}
\crefname{appendix}{Appendix}{Appendices}
\crefname{figure}{Figure}{Figures}
\crefname{assumption}{Assumption}{Assumptions}
\crefname{thm}{Theorem}{Theorems}
\crefname{lem}{Lemma}{Lemmas}
\crefname{cor}{Corollary}{Corollaries}
\crefname{prop}{Proposition}{Propositions}
\crefname{claim}{Claim}{Claims}
\crefname{rem}{Remark}{Remarks}
\crefname{defn}{Definition}{Definitions}
\Crefname{assumption}{Assumption}{Assumptions}
\Crefname{thm}{Theorem}{Theorems}
\Crefname{lem}{Lemma}{Lemmas}
\Crefname{cor}{Corollary}{Corollaries}
\Crefname{prop}{Proposition}{Propositions}
\Crefname{claim}{Claim}{Claims}
\Crefname{rem}{Remark}{Remarks}
\Crefname{defn}{Definition}{Definitions}
\Crefname{example}{Example}{Examples}
\crefformat{equation}{(#2#1#3)}
\crefname{table}{Table}{Tables}

\crefrangelabelformat{equation}{(#3#1#4--#5#2#6)}

\crefmultiformat{equation}{(#2#1#3}{, #2#1#3)}{#2#1#3}{#2#1#3}
\Crefmultiformat{equation}{(#2#1#3}{, #2#1#3)}{#2#1#3}{#2#1#3}

\crefname{appendix}{Appendix}{Appendices}
\Crefname{appendix}{Appendix}{Appendices}

\newtheorem*{lem*}{\protect\lemmaname}

\newcommand{\ee}{\operatorname{e}}
\newcommand{\ii}{\operatorname{i}}

\newcommand{\RR}{\mathbb{R}}

\newcommand{\CC}{\mathbb{C}}

\newcommand\norm[1]{\left\lVert#1\right\rVert}
\newcommand\abs[1]{\left|#1\right|}
\newcommand{\ip}[2]{\langle #1, #2 \rangle}

\newcommand{\dif}{\operatorname{d}\!} % shorten the space afterwards

\renewcommand{\Re}[1]{\operatorname{\mathbb{R}\mathbbm{e}}\{#1\}}
\newcommand{\Arg}[1]{\operatorname{\operatorname{Arg}}\left(#1\right)}

\newcommand{\Id}{\mathds{1}}
\newcommand{\HH}{\mathbb{H}}

\NewEnviron{malign}{%
	\begin{align}\begin{split}
			\BODY
	\end{split}\end{align}
}

\newcommand{\eq}[1]{\begin{align*}#1\end{align*}}

\newcommand{\eql}[1]{\begin{align}#1\end{align}}

\newcommand{\br}[1]{\left(#1\right)}

\title{Average lower bound on magnetic double-well tunneling with asymptotically large well separation}
\author{
    \href{mailto:cf@math.princeton.edu}{Charles L. Fefferman}\\
    {\footnotesize Department of Mathematics, Princeton University}\\    \href{mailto:sk9017@princeton.edu}{Say-Yeon Kwon}\\
{\footnotesize Department of Mathematics, Princeton University}\\
    \href{mailto:jacobshapiro@princeton.edu}{Jacob Shapiro}\\
{\footnotesize Department of Mathematics, Princeton University}\\
{\footnotesize and Faculty of Mathematics, University of Vienna}\\    \href{mailto:miw2103@columbia.edu}{Michael I. Weinstein}\\
    {\footnotesize Department of Applied Physics and Applied Mathematics,}\\
    {\footnotesize and Department of Mathematics, Columbia University}
}
\date{}
\usepackage{tikz}
\begin{document}

\maketitle
%\linenumbers
\begin{abstract}
We study tunneling between two identical, smooth, compactly supported
potential wells in a constant perpendicular magnetic field, with the
coupling and magnetic field fixed and the well separation tending to
infinity. For a sufficiently small positive magnetic field, we prove
lower bounds on the hopping coefficient and on the splitting of the two
lowest eigenvalues outside a set of separations of density zero. If the
well centers are separated by $2d_1$ and the magnetic field strength is
$b\lambda$, the lower bounds have the form
$\ee^{-b\lambda d_1^2-Cd_1^{1+\delta}}$ for every fixed $\delta\in\br{0,1}$.
No radial or reflection symmetry of the single-well potential is assumed.
The proof combines the explicit Landau resolvent kernel with a complex
analytic estimate for logarithmic averages of the hopping coefficient.
\end{abstract}

\section{Introduction}\label{sec:intro}

Quantum tunneling couples states localized near distinct potential wells.
For two identical, sufficiently deep or sufficiently distant wells, the
single-well ground-state energy gives rise to a pair of nearby energy
levels. Their splitting determines the time scale for transfer between
the wells: a smaller splitting corresponds to a longer tunneling time.
In the absence of a magnetic field, the semiclassical analysis of this
splitting is classical. The works of Simon and Helffer--Sj\"ostrand
\cite{Simon1984,HelfferSjostrand1984} relate its exponential scale to the
geometry of the intervening classically forbidden region. Positivity
of the nonmagnetic ground state is an essential ingredient in obtaining
lower bounds. Related lower bounds for compactly supported wells enter
the strong-binding analysis of honeycomb Schr\"odinger operators
\cite{FLW2018}.

A magnetic field changes the picture. Magnetic
translations \cite{Zak1964} multiply translated bound states by complex
phases, and the interaction between the wells is described by an
oscillatory hopping integral. Decay estimates give upper bounds on its
absolute value, but do not exclude cancellations. Helffer and Sj\"ostrand obtained precise semiclassical
tunneling asymptotics for sufficiently small magnetic coupling, under
geometric and analyticity assumptions near the wells and the relevant
minimizing Agmon geodesics \cite{HelfferSjostrand1987}. Their cosine
formula in the symmetric two-geodesic setting yields, in particular,
sharp lower bounds whenever the oscillatory factor is bounded away
from zero \cite[Section~4, (4.13)--(4.14)]{HelfferSjostrand1987}.
Here we study lower bounds for smooth, compactly supported wells in
the large-separation limit.

The work of Fefferman, Shapiro and Weinstein on this question began with
\cite{SIAM2022_FSW}, published in 2022. For a class of radial single-well potentials,
they proved upper and lower bounds on ground-state tunneling in the
simultaneous limit of deep wells and strong magnetic fields: the
potential is scaled by $\lambda^2$, the field strength is proportional
to $\lambda$, and $\lambda\to\infty$ at fixed well separation. Radial
symmetry permits a lower bound on the oscillatory hopping coefficient
and, through a reduction to the two localized states, on the eigenvalue
splitting. These estimates also enter the derivation of magnetic
tight-binding models by Shapiro and Weinstein \cite{SW2022}.

The choice of compactly supported atomic wells is of
interest in the modeling of macroscopic crystalline quantum materials.
For example, a natural model replaces the double-well potential by a
potential formed by centering the atomic potential at each vertex of an
infinite discrete set, which can be translation invariant (periodic) or
disordered. Under a suitable non-clustering assumption on these points
and appropriate lower bounds on hopping, the large-$\lambda$ regime is
described by a discrete effective (tight-binding) magnetic Hamiltonian.
Such models are widely used in the physics community. A central point
in a rigorous derivation of a limiting Hamiltonian is an appropriate
lower bound on the double-well magnetic hopping coefficient between
minimally separated atoms \cite{SW2022}. The examples of vanishing hopping
in \cite{FSW_25_absence,FSW2025PNAS} illustrate the non-applicability of
the standard limiting tight-binding model for certain atomic well
configurations.

In \cite{FSW2025}, Fefferman, Shapiro and Weinstein studied tunneling
associated with excited single-well states in the absence of a magnetic
field. Here the asymptotic parameter is already the separation: the
coupling is fixed and the wells are moved apart. Under reflection
symmetry, a boundary representation of the hopping coefficient has a
definite sign despite the nodes of the excited state, and unique
continuation yields a lower bound. For simple isolated single-well
levels, they also establish the asymptotic relation between hopping and
eigenvalue splitting. Thus the large-separation regime considered here
has a nonmagnetic predecessor in \cite{FSW2025}.

The subsequent preprints \cite{FSW_25_absence,FSW_25_lower_bound} develop
the magnetic results announced in \cite{FSW2025PNAS}. The counterexample
in \cite{FSW_25_absence} shows that suitable smooth, compactly supported, nonpositive,
nonradial wells in a constant magnetic field can produce
an exactly degenerate double-well ground state, and hence vanishing
low-energy tunneling. This construction uses a family of potentials
whose shape may depend on the large coupling $\lambda$; it also permits
a change in the parity of the ground state. It establishes that
magnetic cancellations can eliminate the splitting, rather than merely
make it smaller.

For a fixed compactly supported potential with a unique nondegenerate
minimum, \cite{FSW_25_lower_bound} bounds hopping and splitting below
by $\ee^{-\lambda^{1+\varepsilon}}$ outside a set of couplings of density
zero as $\lambda\to\infty$. The separation is fixed and sufficiently
large; $\varepsilon>0$ is arbitrary and $b>0$ is sufficiently small.
The proof uses analytic continuation in $\lambda$ and logarithmic
averaging.

In the present paper, we fix the coupling $\lambda$ and the magnetic
field strength $B=b\lambda>0$, and let the distance $2d_1$ between the
well centers tend to infinity. Physically, increasing $\lambda$ and
increasing $d_1$ provide two ways of weakening the interaction between
the localized states. In the first regime, localization becomes
stronger while the centers remain fixed; in the second, the localized
profiles remain fixed while their centers recede. In either case the
overlap becomes small and the tunneling time becomes large. This is a
comparison of decoupling mechanisms, not an identification of the two
limits. In particular, the present result resolves the Gaussian
distance scale $\ee^{-Bd_1^2}$, with a subquadratic correction to its
exponent, for separations outside a set of density zero.
The earlier lower bound requires $\lambda^{1+\varepsilon}$ in the
exponent and does not establish $\ee^{-C\lambda}$. Here the leading
exponent is exactly $Bd_1^2$: the correction $Cd_1^{1+\delta}$, with
$\delta<1$, leaves both the quadratic power and its coefficient
unchanged.

\subsection{Framework and main results}

Fix $\lambda>0$ and a real, nonzero, nonpositive, smooth potential $v$
supported in $B_a\br{0}\subset\RR^2$. Write $d=d_1e_1$, with $d_1>a$,
and consider the single-well and double-well Hamiltonians
\eq{
 h&=\br{P-\frac12b\lambda X^\perp}^2+\lambda^2v,\\
 H&=\br{P-\frac12b\lambda X^\perp}^2
       +\lambda^2\br{v_L+v_R},
}
on $L^2\br{\RR^2}$, where $P=-\ii\nabla$,
$X^\perp=\br{-X_2,X_1}$,
$v_L\br{x}=v\br{x+d}$ and $v_R\br{x}=v\br{x-d}$.
Thus $d_1$ is half the distance between the well centers. We take
$\lambda$ such that the nonmagnetic single well has a negative
ground-state energy. For sufficiently small $b_\star>0$, we consider
$b\in\br{0,b_\star}$ and write $e_0<0$ and $\varphi$ for the simple
ground-state eigenvalue and normalized eigenfunction of $h$. We choose
the phase of $\varphi$ continuously at $b=0$.

For $u\in\RR^2$, let
\eq{
 \br{\hat R^u f}\br{x}
 &=\ee^{-\ii\frac{b\lambda}{2}x\wedge u}f\br{x-u},
 \qquad x\wedge u=x_1u_2-x_2u_1,\\
 \varphi_L&=\hat R^{-d}\varphi,
 \qquad \varphi_R=\hat R^d\varphi.
}
Following \cite{SIAM2022_FSW,FSW2025,FSW_25_lower_bound}, we define the
hopping coefficient by
\eq{
 \rho\br{d_1}=\lambda^2
 \int_{\RR^2}\overline{v_L\br{x}\varphi_L\br{x}}
                  \varphi_R\br{x}\,\dif x.
}
Thus $\rho\br{d_1}$ is the off-diagonal interaction matrix element:
$\rho\br{d_1}=\ip{\varphi_L}{\br{H-e_0\Id}\varphi_R}$.
We denote by $\Delta\br{d_1}$ the difference between the two lowest
eigenvalues of $H$. When the dependence on the magnetic parameter is
relevant, we write $\rho\br{d_1,b}$.

Throughout the large-distance argument, $b\in\br{0,b_\star}$ and
$\lambda$ are fixed; constants and distance thresholds may depend on
$b,\lambda,v$. A measurable set $Z\subset[D,\infty)$ has density zero
if $\abs{Z\cap[D,R]}/R\to0$ as $R\to\infty$, where $\abs{\cdot}$ denotes
Lebesgue measure. Our main results are the following.

\begin{thm}\label{thm:main_hopping}
   There exists $b_\star>0$ such that for each fixed $b\in\br{0,b_\star}$
   and each fixed $\delta\in\br{0,1}$,
   there are $D,C>0$ and a set $Z_\delta\subset[D,\infty)$ of density zero
   such that for all $d_1\in[D,\infty)\setminus Z_\delta$,
   \eql{
        \abs{\rho\br{d_1}} \geq \ee^{-b\lambda d_1^2-Cd_1^{1+\delta}}.
   }
\end{thm}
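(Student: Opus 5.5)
The plan is to treat $\rho$ as the restriction to the real axis of an entire function of the separation, with controlled growth, and then to use a Jensen-type estimate on logarithmic averages to show that $\log\abs{\rho}$ cannot be very negative except on a set of density zero.

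\textbf{Step 1: explicit formula.} Since $\varphi$ is fixed and compactly supported potentials are involved, write $\varphi=-\lambda^2 R_0(e_0)\,(v\varphi)$ with $R_0(e_0)=\br{\br{P-\tfrac12 b\lambda X^\perp}^2-e_0}^{-1}$ the Landau resolvent. Its kernel is explicit:
\eq{
 G(x,y)=\ee^{\ii\frac{b\lambda}{2}x\wedge y}\,g\br{\abs{x-y}^2},
}
where $g$ is a confluent hypergeometric (Tricomi $U$) function times $\ee^{-b\lambda\abs{x-y}^2/4}$. Substituting this into the definition of $\rho$ gives
\eq{
 \rho(d_1)=\lambda^4\iint \overline{v\varphi(x)}\,\br{v\varphi}(y)\,\ee^{\ii\theta(x,y,d_1)}\,g\br{\abs{x-y+2d}^2}\,\dif x\,\dif y,
}
with $x,y\in B_a(0)$ and a phase $\theta$ that is linear in $d_1$. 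The Gaussian factor $\ee^{-b\lambda\abs{x-y+2d}^2/4}$ yields the leading size $\ee^{-b\lambda d_1^2}$ (recall the separation is $2d_1$). One factors out the explicit multiplier $\ee^{-b\lambda d_1^2}(2d_1)^{\kappa}$ coming from the asymptotics of $U$, leaving
\eq{
 F(d_1)=\iint \overline{v\varphi(x)}\,\br{v\varphi}(y)\,\ee^{\ii\alpha\cdot(x,y)d_1-b\lambda(x_1-y_1)d_1}\,\tilde g(d_1,x,y)\,\dif x\,\dif y .
}

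\textbf{Step 2: analytic continuation.} Extend $d_1\mapsto F(d_1)$ to a sector or strip around $[D,\infty)$, using analyticity of $U$ off the negative axis. Since $x,y$ range over a compact set, this gives the growth bound $\abs{F(z)}\le \ee^{C\abs{z}}$ there.

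\textbf{Step 3: nondegeneracy at one point.} We need a normalizing point where $F$ is not small. Proving that $F$ is not identically zero is the main obstacle, since no symmetry is available. I would first argue at $b=0$: there $\varphi>0$ and the kernel is positive, so $F(\cdot,0)$ is positive and bounded below by $\ee^{-cz}$ (Bessel $K_0$ asymptotics). Then, for $b<b_\star$, perturb: at a point $z_0$ of size comparable to $1/b$ (after rescaling), the difference $F(\cdot,b)-F(\cdot,0)$ is small relative to $F(z_0,0)$. I am least confident about this step. An alternative is to extract the leading exponential $\ee^{b\lambda\max(x_1-y_1)d_1}$ as $d_1\to\ii\infty$ along the imaginary direction and use the support of $v\varphi$ together with unique continuation. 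Either route should yield $\abs{F(z_0)}\ge \ee^{-C\abs{z_0}}$ at points $z_0$ spaced at scale $R^{\delta}$.

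\textbf{Step 4: Jensen averaging.} Apply a Cartan/Jensen lemma on discs of radius $r=R^{\delta}$ centered at the good points $z_0$. Outside an exceptional set of measure at most $\eta r$ in each disc,
\eq{
 \log\abs{F}\ge -C(\eta)\,r^{1-\delta}\cdot r^{\delta}\ldots\ge -Cd_1^{1+\delta}.
}
More precisely, the bound on $\log\abs{F}$ is $-C\log(1/\eta)\cdot(\text{growth on the disc})\approx -C\log(1/\eta)\,d_1$, times the number of iterations needed to reach $d_1$ from the normalizing point. This produces the loss $d_1^{1+\delta}$ and the statement of the theorem. Choosing $\eta=\eta_k\to0$ on dyadic blocks makes the union of exceptional sets $Z_\delta$ have density zero. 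Finally, combine with the factored Gaussian to obtain $\abs{\rho}\ge \ee^{-b\lambda d_1^2-Cd_1^{1+\delta}}$.
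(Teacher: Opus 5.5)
There is a genuine gap, in Steps~3 and~4.

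\textbf{Step 3.} It ends by asserting $\abs{F(z_0)}\ge\ee^{-C\abs{z_0}}$ at points $z_0$ spaced at scale $R^{\delta}$ for all large $R$. Neither of your routes produces this. Continuity in $b$ from the nonmagnetic problem gives a lower bound at one separation $d_1^\star$, fixed before $b_\star$ is chosen. That is exactly what the paper uses, and it is all continuity can give. For fixed $b>0$ there is no uniformity in $d_1$: the magnetic hopping behaves like $\ee^{-Bd_1^2}$ times an oscillatory integral, while the nonmagnetic one decays like $\ee^{-2\kappa d_1}/\sqrt{d_1}$. At your choice $z_0\sim1/b$ the phase $\ee^{\ii B(x_2+y_2)z_0}$ is an $O(1)$ rotation, and $Bz_0^2$ and $2\kappa z_0$ are of the same order. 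So $F(z_0,b)-F(z_0,0)$ is not small relative to $F(z_0,0)$. Good points along a sequence tending to infinity would essentially be the theorem itself. The imaginary-direction alternative is not developed far enough to supply even one such point.

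\textbf{Step 4.} With a single normalizing point, this step cannot give $d_1^{1+\delta}$. The function is not entire: $U(\nu,1,\cdot)$ has a logarithmic singularity at $0$, and $\rho$ is holomorphic only on $\operatorname{Re}z>a$. So Cartan discs about $d_1^\star$ have bounded radius and you must chain them. Each application of Cartan's lemma bounds $-\log\abs{F}$ at the next center only by roughly $1+H(\eta)$ times the sum of the growth on the disc and the previous loss. The losses therefore compound multiplicatively:
\begin{itemize}
\item $d_1^{1-\delta}$ steps of length $d_1^{\delta}$ give $-\log\abs{F}$ of size $\exp(c\,d_1^{1-\delta})$;
\item even geometrically growing discs give, with the standard constants, a power of $d_1$ well above $2$, worse than the Gaussian scale you want to resolve.
\end{itemize}

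\textbf{Missing idea.} What is needed is one global estimate on a sector of opening close to $\pi$. The paper bounds $q(z)=\ee^{Bz^2}\rho(z)$ by $M\ee^{c\operatorname{Re}z}$ on $D_0+\Gamma_{\pi/2-\theta}$. This uses the Tricomi bound $\abs{U(\nu,1,\zeta)}\le\abs{\zeta}^{-\nu}(\sin(\theta/2))^{-\nu}$ for $\abs{\arg\zeta}\le\pi-\theta$. A strip would not suffice, since harmonic measure decays exponentially along a strip. A Blaschke--Jensen argument for $F_1/F_2$ on this sector, normalized only at $d_1^\star$, then bounds the averages of $u(s)=\log M+cs-\log\abs{q(s)}\ge0$ over $[R,(1+\varepsilon)R]$ by $CR^{\pi/(\pi-2\theta)}$. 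The exponent $1+\delta$ comes from letting the opening tend to $\pi$, not from disc radii $R^{\delta}$. Chebyshev's inequality on dyadic blocks then gives the density-zero exceptional set directly.
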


As a consequence of \cref{thm:main_hopping}, we obtain the corresponding
lower bound on the eigenvalue splitting.

\begin{thm}\label{thm:main_eigenvalue_splitting}
      There exists $b_\star>0$ such that for each fixed $b\in\br{0,b_\star}$
      and each fixed $\delta\in\br{0,1}$,
      there are $D,C>0$ and a set $Z_\delta\subset[D,\infty)$ of density zero
      such that for all $d_1\in[D,\infty)\setminus Z_\delta$,
      \eql{
        \Delta\br{d_1} \geq \ee^{-b\lambda d_1^2-Cd_1^{1+\delta}}.
      }
\end{thm}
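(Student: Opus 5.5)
We deduce \cref{thm:main_eigenvalue_splitting} from \cref{thm:main_hopping}. The idea is to reduce $H$ on its low-lying spectral subspace to a $2\times2$ effective matrix on the span of $\varphi_L,\varphi_R$, and to show that the splitting equals $2\abs{\rho\br{d_1}}$ up to a relatively small error. This is the reduction used in \cite{SIAM2022_FSW,FSW2025,FSW_25_lower_bound}.

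\textbf{Step 1 (almost-eigenfunctions).} Since $v$ is supported in $B_a\br{0}$ and $\varphi$ solves $\br{h-e_0}\varphi=0$, we have $\br{H-e_0}\varphi_R=\lambda^2v_L\varphi_R$, and symmetrically for $\varphi_L$. Outside the support of $v$, $\varphi$ satisfies the free Landau equation at energy $e_0<0$. It is therefore expressible through the explicit Landau resolvent kernel, which gives pointwise bounds $\abs{\varphi\br{x}}\lesssim \ee^{-\frac{b\lambda}{4}\abs{x}^2+C\abs{x}}$. Near the left well, $\abs{x-d}\approx 2d_1$, so
\eq{
 \norm{\br{H-e_0}\varphi_{L/R}}\lesssim \ee^{-b\lambda d_1^2+Cd_1}.
}
The same decay gives the overlap bound $\abs{\ip{\varphi_L}{\varphi_R}}\lesssim \ee^{-b\lambda d_1^2+Cd_1}$. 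The interaction terms $\ip{\varphi_L}{\lambda^2v_R\varphi_L}$ are much smaller, of order $\ee^{-2b\lambda d_1^2+Cd_1}$.

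\textbf{Step 2 (spectral gap and Schur/Feshbach reduction).} For small $b$, $e_0$ is simple and isolated in the spectrum of $h$. By a standard IMS localization argument, $H$ then has exactly two eigenvalues in $\br{e_0-\eta,e_0+\eta}$ and the rest of its spectrum at distance $\geq\eta$ from $e_0$, for $d_1\geq D$. Let $\Pi$ be the spectral projection onto these two eigenvalues. Projecting $\varphi_{L/R}$ by $\Pi$ changes them by $O\br{\ee^{-b\lambda d_1^2+Cd_1}}$. Gram--Schmidt orthonormalizing the basis $\Pi\varphi_L,\Pi\varphi_R$ and computing the matrix of $\br{H-e_0}\Pi$ gives
\eq{
 M=\begin{pmatrix}\alpha & \rho\br{d_1}\\ \overline{\rho\br{d_1}} & \alpha'\end{pmatrix}+E.
}
The error obeys $\norm{E}\lesssim \norm{\br{H-e_0}\varphi_{L/R}}^2+\abs{\ip{\varphi_L}{\varphi_R}}\,\abs{\rho}$, so $\norm{E}\lesssim\ee^{-2b\lambda d_1^2+Cd_1}$.

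\textbf{Step 3 (diagonal terms and conclusion).} The key point is $\alpha=\alpha'$ up to $O\br{\ee^{-2b\lambda d_1^2+Cd_1}}$. Both diagonal entries equal $\ip{\varphi}{\lambda^2v\br{\cdot\pm2d}\varphi}$ modulo such errors, and each of these is itself doubly exponentially small, since $\varphi$ is $\ee^{-b\lambda d_1^2}$-small on the far well. No symmetry of $v$ is needed for this. Hence the eigenvalue difference of $M$ satisfies
\eq{
 \Delta\br{d_1}=\sqrt{\abs{\alpha-\alpha'}^2+4\abs{\rho}^2}+O\br{\norm{E}}\geq 2\abs{\rho\br{d_1}}-C\ee^{-2b\lambda d_1^2+Cd_1}.
}
On $[D,\infty)\setminus Z_\delta$, \cref{thm:main_hopping} gives $\abs{\rho}\geq\ee^{-b\lambda d_1^2-Cd_1^{1+\delta}}$. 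This dominates the error for large $d_1$, because $b\lambda d_1^2-Cd_1^{1+\delta}-Cd_1\to\infty$. Enlarging $D$ and $C$ then yields the claim with the same density-zero set $Z_\delta$.

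\textbf{Main obstacle.} The delicate step is Step 2: controlling the error uniformly with the sharp Gaussian rate $\ee^{-b\lambda d_1^2}$, rather than merely exponential decay, so that the errors are genuinely of square order relative to $\rho$. I would obtain this from the explicit Landau kernel representation of $\varphi$ outside $B_a\br{0}$. The magnetic phases coming from $\hat R^{\pm d}$ must be tracked in the inner products, though they do not affect the absolute-value bounds.
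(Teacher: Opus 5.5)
Your proposal is correct and follows essentially the paper's argument: Gaussian almost-eigenfunction bounds from the Landau kernel, a $d_1$-uniform gap from magnetic IMS localization, and reduction to a Hermitian $2\times2$ matrix whose off-diagonal entry is $\rho\br{d_1}$ up to $O\br{\ee^{-2b\lambda d_1^2+Cd_1}+\abs{\ip{\varphi_L}{\varphi_R}}\abs{\rho\br{d_1}}}$, followed by \cref{thm:main_hopping} off $Z_\delta$; your use of $\sqrt{\abs{\alpha-\alpha'}^2+4\abs{\rho}^2}\geq2\abs{\rho}$ even makes $\alpha\approx\alpha'$ unnecessary for the lower bound. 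The one inaccurate step is the overlap bound, which is harmless because, as in the paper, the overlap only enters as a relative error and $\ip{\varphi_L}{\varphi_R}\to0$ suffices: pointwise bounds on $\abs{\varphi}$ give only $\abs{\ip{\varphi_L}{\varphi_R}}\lesssim\ee^{-\frac{b\lambda}{2}d_1^2+Cd_1}$, since $\ee^{-\frac{b\lambda}{4}\br{\abs{x+d}^2+\abs{x-d}^2}}=\ee^{-\frac{b\lambda}{2}d_1^2}$ at $x=0$ (the rate $\ee^{-b\lambda d_1^2+Cd_1}$ is true, e.g.\ from $\ip{\varphi_L}{\varphi_R}=\lambda^4\ip{v_L\varphi_L}{R_0^2v_R\varphi_R}$ with $R_0$ the Landau resolvent at $e_0$, but this requires passing through the resolvent rather than bounding moduli pointwise).
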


The meaning of generic separation here is precisely the exclusion of a
set of density zero. The estimates do not assert a positive lower bound
at every sufficiently large separation. No radial or reflection
symmetry, or nondegeneracy of the minimum of $v$, is required. The
smallness of $b$ is used to retain a nonzero hopping coefficient at one
fixed separation by continuity from the nonmagnetic problem; all
subsequent distance estimates keep this positive $b$ fixed.
Together with the upper bound in \cref{lem:global_ub} and the
two-state reduction below, the lower bounds imply
\eq{
 \lim_{\substack{d_1\to\infty\\d_1\notin Z_\delta}}
 \frac{-\log\abs{\rho\br{d_1}}}{d_1^2}
 =\lim_{\substack{d_1\to\infty\\d_1\notin Z_\delta}}
 \frac{-\log\Delta\br{d_1}}{d_1^2}
 =b\lambda.
}

\subsection{Analytic continuation in the separation}

The principal simplification compared to \cite{FSW_25_lower_bound} is that the single-well problem does not
depend on $d_1$. In \cite{FSW_25_lower_bound}, complexifying $\lambda$
also complexifies the magnetic field and changes the single-well
eigenvalue and eigenfunction. The full Landau resolvent does not admit
the required continuation as a bounded operator on $L^2$. That work
instead constructs an analytic continuation of the resolvent acting on
compactly supported data, using a cutoff Hamiltonian and magnetic
harmonic-oscillator estimates.

Here $B$, $e_0$ and $\varphi$ remain fixed. The Landau resolvent formula
\cite[Lemma 5.1]{Korotyaev2003ATF} rewrites the hopping coefficient as
the integral in \cref{eq:def_hopping_coef}, whose integration variables
lie in the fixed compact support of $v$. The complex parameter $z\in\CC$, obtained by allowing $d_1$ to take
complex values, enters only through explicit exponential factors
and the spatial argument of Tricomi's function $U$.
For $\Re{z}>a$, that argument avoids its branch cut. For $\Re{z}>a$, that argument avoids its branch
cut. Consequently, we can continue the scalar function $\rho\br{z}$
without analytically continuing either the potential or the
single-well eigenfunction. The required continuation and sector bounds
for $U$ are recalled in \cref{appendix:tricomy}; see also
\cite{Daalhuis2025DLMF13}.

It is useful to retain the Gaussian decay exactly and study
\eq{
 B=b\lambda,\qquad q\br{z}=\ee^{Bz^2}\rho\br{z}.
}
On a translated sector with opening arbitrarily close to $\pi$,
\cref{lem:global_ub} bounds $\abs{q\br{z}}$ by $M\ee^{c\Re{z}}$.
A nonzero value at one real separation, supplied by
\cref{prop:lowerbound_single_point}, then allows us to apply
\cref{lem:complex_anal}. This gives a logarithmic average estimate
with a remainder of order $d_1^{1+\eta}$ for every fixed $\eta>0$,
after choosing the sector sufficiently wide. The resulting estimate
on the measure of the set where $\abs{\rho}$ is too small proves
\cref{thm:main_hopping}. Finally, the reduction to the two localized
states gives
\eq{
 \Delta\br{d_1}
 =2\abs{\rho\br{d_1}}\br{1+o\br{1}}
}
as $d_1\to\infty$ outside the exceptional set, which yields
\cref{thm:main_eigenvalue_splitting}.

\subsection*{Outline}

The proof of \cref{thm:main_hopping} begins with the averaged estimate
in \cref{thm:main_avg_lower_bound_hopping}.
In \cref{subsec:holomorphicity_of_rho} we obtain the integral
representation and establish holomorphicity in the separation.
In \cref{sec:global_upper_bound_rho} we bound the Gaussian quotient.
After establishing the lower bound at one point, we apply
\cref{lem:complex_anal} to obtain the averaged and exceptional-set
estimates. We then prove \cref{thm:main_eigenvalue_splitting} by a
two-state reduction. Finally, \cref{appendix:tricomy} supplies the
analytic continuation and bounds for Tricomi's function used in the
argument.

\paragraph{Acknowledgements.}
CF was supported in part by NSF grant DMS-1700180. JS was supported in part by NSF grant DMS-2510207. MIW was supported in part by NSF grants DMS-1908657, DMS-1937254, DMS-2510769  and Simons Foundation Math + X Investigator Award \# 376319 (MIW). Part of this research was carried out during the 2023-24 academic year, when MIW was a Visiting Member in the School of Mathematics - Institute of Advanced Study, Princeton, supported by the Charles Simonyi Endowment, and a Visiting Fellow in the Department of Mathematics at Princeton University.

\section{Proof of \cref{thm:main_hopping}}
We first control the positive part of the logarithmic remainder after
factoring out the exact Gaussian $\ee^{-b\lambda d_1^2}$. The following
averaged estimate yields the exceptional-set bound in
\cref{thm:main_hopping}.
\begin{thm} \label{thm:main_avg_lower_bound_hopping}
    For each fixed $b\in\br{0,b_\star}$, $\varepsilon>0$ and
    $\eta\in\br{0,1}$, there exist $D,C>0$ such that for all $d_1>D$,
    \eql{\label{eq:averaged_gaussian_remainder}
        \frac{1}{\varepsilon d_1}
        \int_{d_1}^{\br{1+\varepsilon}d_1}
        \br{-\log\abs{\rho\br{s,b}}-b\lambda s^2}_+\,\dif s
        \leq C d_1^{1+\eta}.
    }
    Here $t_+:=\max\br{t,0}$.
\end{thm}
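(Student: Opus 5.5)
The plan is to work with the Gaussian quotient $q\br{z}=\ee^{Bz^2}\rho\br{z}$, $B=b\lambda$, and to turn the statement into a Jensen-type bound for a holomorphic function of controlled growth in a wide sector. Since $-\log\abs{\rho\br{s}}-Bs^2=-\log\abs{q\br{s}}$ for real $s$, the left side of \cref{eq:averaged_gaussian_remainder} is the average of $\br{-\log\abs{q\br{s}}}_+$ over $[d_1,\br{1+\varepsilon}d_1]$.

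The steps, in order, are as follows.
\begin{enumerate}
\item Use the integral representation from \cref{subsec:holomorphicity_of_rho}, built on the Landau resolvent and Tricomi's $U$, to get that $q$ is holomorphic on $\Re{z}>a$.
\item Invoke \cref{lem:global_ub}: on a translated sector $S_\theta=\{z: \abs{\Arg{z-z_0}}<\theta\}$, with $\theta<\pi/2$ arbitrarily close to $\pi/2$ (opening $2\theta$ close to $\pi$), we have $\abs{q\br{z}}\le M\ee^{c\Re{z}}$. In particular $\log\abs{q}\le \log M+c\abs{z}$ there.
\item Invoke \cref{prop:lowerbound_single_point} to get a real point $x_0\in S_\theta$ with $q\br{x_0}\neq0$, say $\log\abs{q\br{x_0}}\ge -K$.
\item Apply \cref{lem:complex_anal} to $q$ on $S_\theta$ with this base point.
\end{enumerate}

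What \cref{lem:complex_anal} should deliver is the following. Conformally map $S_\theta$ to a half-plane by $w=\br{z-z_0}^{\pi/(2\theta)}$. A function of exponential type $\ee^{c\abs{z}}$ becomes one of order $2\theta/\pi<1$, close to $1$, in the half-plane. Jensen/Nevanlinna (Poisson–Jensen for the half-plane, applied on a disc of radius comparable to $R=2d_1$ centred on the real axis near the interval) then bounds the integral over a real interval of length $\sim R$ of $\log^-\abs{q}$ by the harmonic majorant of $\log^+\abs{q}$ on the larger region. That majorant is $O\br{R}$ per unit length, plus the boundary contribution.

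The main obstacle is this last point: the sector edges are not in the domain of good bounds near the apex. Since $q$ is only bounded by $\ee^{c\Re{z}}$, the Poisson-kernel weight of the edges at distance $\sim R$ contributes a loss of size $R^{\pi/(2\theta)}$ rather than $R$. Choosing $\theta$ close enough to $\pi/2$ that $\pi/(2\theta)-1\le\eta$ gives a total bound $\int_{d_1}^{(1+\varepsilon)d_1}\log^-\abs{q}\,\dif s\le C d_1^{1+\eta}\cdot d_1$. Dividing by $\varepsilon d_1$ yields $Cd_1^{1+\eta}$, with the normalising point $x_0$ contributing only $K$.

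Concretely, for the hardest step I would first try the harmonic measure argument directly. Let $u$ be the harmonic majorant of $\log\abs{q}$ in $\Omega=S_\theta\cap B_{4d_1}\br{x_0}$. Its values are bounded by $\log M+4cd_1$ on the arc and by $\log M+c\abs{z}$ on the sector edges. Then $\log\abs{q\br{x_0}}\le$ the Poisson integral of $\log\abs{q}$. This gives $\int\log^-\abs{q}\,\dif\omega_{x_0}\le \int\log^+\abs{q}\,\dif\omega_{x_0}+K$. The density of harmonic measure from $x_0$ on $[d_1,(1+\varepsilon)d_1]$ is at least $\gtrsim d_1^{-\pi/(2\theta)}\cdot d_1^{-1}$ up to constants, which is where the exponent $\eta$ appears. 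The possible zeros of $q$ on the interval are harmless, since $\log\abs{q}$ is subharmonic and integrable.
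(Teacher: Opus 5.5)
Your plan is the paper's proof: pass to $q=\ee^{Bz^2}\rho$, use holomorphy on $\operatorname{Re}z>a$, the sector bound $\abs{q}\le M\ee^{c\operatorname{Re}z}$ of \cref{lem:global_ub} and the nonzero value from \cref{prop:lowerbound_single_point}, then apply \cref{lem:complex_anal} with half-opening close enough to $\pi/2$ that the exponent $\pi/(2\alpha)$ is at most $1+\eta$. Two points need fixing: to apply the lemma as stated, name the nowhere-vanishing majorant $F_2=M\ee^{cz}$ (composed with the affine map sending $1$ to your base point), so that it controls the average of $u=-\log\abs{q/F_2}\ge\br{-\log\abs{q}}_+$ (since $\log M+cs\ge0$); and in your optional harmonic-measure sketch you must slit $\Omega$ along $[d_1,(1+\varepsilon)d_1]$, because otherwise that interval lies in the interior of $\Omega$ and carries no harmonic measure from $x_0$.
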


In particular, taking $\varepsilon=1/5$ and absorbing the subquadratic
remainder gives, for all sufficiently large $d_1$,
\eq{
    \frac{5}{d_1}\int_{d_1}^{6d_1/5}
    -\log\abs{\rho\br{s,b}}\,\dif s
    \leq \frac32 b\lambda d_1^2.
}
 
To establish these estimates, we use the following lemma.
 \begin{lem}[Complex Analysis Lemma] \label{lem:complex_anal}
    Let \eq{\Gamma_{\alpha}:= \{z\in \CC: -\alpha < \text{Arg}(z) < \alpha\}, ~ 0< \alpha \leq \frac{\pi}{2}.} Suppose $F_1, F_2:\Gamma_\alpha \to \CC$ are holomorphic, $F_2$ is nowhere zero, and
    \begin{itemize} 
        \item there exists  $\beta >0$ such that $\abs{F_1(1)} > \ee^{-\beta}$ and 
        \item  for all $z \in \Gamma_\alpha, \abs{F_1(z)} \leq \abs{F_2(z)}. $
    \end{itemize}
    Then, for every $\varepsilon>0$ and all sufficiently large $R$,
    \eql{
        \frac{1}{\varepsilon R}\int_{t=R}^{(1+\varepsilon )R} -\log\abs{F_1(t)}dt \leq C_{\varepsilon,\alpha} \left(\beta +\log \abs{F_2(1)}\right) R^{\pi/(2\alpha)} + \frac{1}{\varepsilon R}\int_{t=R}^{(1+\varepsilon )R} -\log\abs{F_2(t)}dt.
    }
\end{lem}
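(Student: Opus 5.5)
We reduce to a Jensen-type estimate for a holomorphic function with no zeros obstruction. Set $G=F_1/F_2$. Since $F_2$ is holomorphic and nowhere zero on $\Gamma_\alpha$, $G$ is holomorphic there, and $\abs{G}\le 1$. Moreover $-\log\abs{F_1(t)}=-\log\abs{G(t)}-\log\abs{F_2(t)}$, so the claim is equivalent to
\eq{
\frac{1}{\varepsilon R}\int_R^{(1+\varepsilon)R}-\log\abs{G(t)}\,\dif t\le C_{\varepsilon,\alpha}\br{\beta+\log\abs{F_2(1)}}R^{\pi/(2\alpha)}.
}
Note $\abs{G(1)}=\abs{F_1(1)}/\abs{F_2(1)}>\ee^{-\beta-\log\abs{F_2(1)}}$, so $-\log\abs{G(1)}<\beta+\log\abs{F_2(1)}=:A$. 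We have $A>0$ because $\abs{G(1)}\le1$. The function $u=-\log\abs{G}$ is therefore a nonnegative superharmonic function on $\Gamma_\alpha$ (harmonic off the zeros of $G$, with $+\infty$ at the zeros), and $u(1)<A$.

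Next, conformally straighten the sector. The map $\zeta=z^{\pi/(2\alpha)}$ sends $\Gamma_\alpha$ onto the right half-plane $\HH_+=\{\Re{\zeta}>0\}$, fixes $1$, and sends $[R,(1+\varepsilon)R]$ onto $[R^{\gamma},((1+\varepsilon)R)^{\gamma}]$ with $\gamma=\pi/(2\alpha)\ge1$. Then $\tilde u=u\circ z$ is nonnegative superharmonic on $\HH_+$ with $\tilde u(1)<A$. I would compare $\tilde u$ with the Poisson/Green representation of the half plane. By the Riesz decomposition, a nonnegative superharmonic function on $\HH_+$ equals $c\,\Re{\zeta}$ plus the Poisson integral of a positive boundary measure $\mu$ plus the Green potential of a positive measure $\nu$ (the zeros of $G$).

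Evaluating at $\zeta=1$ gives
\eq{
c+\int\frac{\mu(\dif y)}{1+y^2}+\int G_{\HH_+}(1,w)\,\nu(\dif w)<A.
}
This is the Nevanlinna/Carleman-type bound. The averaged quantity is then estimated term by term over $t\in[R^\gamma,((1+\varepsilon)R)^\gamma]$ after changing variables.

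\textbf{Linear term.} The term $c\,t$ contributes at most $A\,((1+\varepsilon)R)^\gamma$.

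\textbf{Poisson term.} For real $t\ge1$, the Poisson kernel satisfies $P(t,y)=\frac{t}{\pi(t^2+y^2)}\le \frac{Ct}{1+y^2}$. Its average is therefore $\lesssim A R^\gamma$.

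\textbf{Green potential.} This is the main obstacle, since near a zero $G_{\HH_+}(t,w)$ has a logarithmic singularity and we need an integral in $t$ only along a segment. I would split $\nu$ into far and near parts.

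For $w$ with $\abs{w-t}\gtrsim \Re{w}$, or far from the segment, use the comparison $G_{\HH_+}(t,w)\le C\,t\,G_{\HH_+}(1,w)$. This holds for $t\ge1$ by a Harnack/boundary-Harnack comparison, since both are positive harmonic in $\zeta$ away from $w$ and vanish on the boundary.

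For $w$ within a Whitney-type neighborhood of the segment of size comparable to $R^\gamma$, the log singularity is integrable along the line: $\int_I \log^+\frac{R^\gamma}{\abs{t-w}}\,\dif t\lesssim R^\gamma$ uniformly in $w$. Also, $G_{\HH_+}(1,w)\gtrsim \Re{w}/\abs{w}^2\sim R^{-\gamma}$ there. Hence the near-part mass is $\lesssim A R^{\gamma}$, and its contribution to the average is $\lesssim A R^\gamma$ as well.

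Summing the three contributions yields a bound $C_{\varepsilon,\alpha}AR^{\gamma}$ for $\frac{1}{\abs{I}}\int_I\tilde u$. Finally, convert back to $t\in[R,(1+\varepsilon)R]$. The Jacobian of $t\mapsto t^\gamma$ is comparable to $R^{\gamma-1}$ uniformly on the interval (constants depending on $\varepsilon,\gamma$), so averages over the two intervals are comparable. This completes the proof, valid for $R\ge R_0(\varepsilon,\alpha)$ large.
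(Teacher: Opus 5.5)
Your proposal is correct and follows essentially the paper's route. The paper applies Blaschke factorization to $F_1/F_2$, citing \cite[Corollary 3.3]{FSW_25_lower_bound} for $\varepsilon=1$; your Riesz decomposition of $-\log\abs{F_1/F_2}$ on the straightened sector into $c\,\Re{\zeta}$, a Poisson integral and a Green potential over the zeros is exactly the additive form of that factorization. The only difference is that you treat general $\varepsilon$ directly, whereas the paper deduces it from $\varepsilon=1$ by interval containment and a dyadic covering, using $-\log\abs{F_1/F_2}\ge 0$.
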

The proof follows from Blaschke factorization applied to $F_1/F_2$.
The case $\varepsilon=1$ is presented in
\cite[Corollary 3.3]{FSW_25_lower_bound}. The general case follows by
interval containment and a finite dyadic covering, since
$-\log\abs{F_1/F_2}\geq0$.

For what follows, put $B=b\lambda$ and
$q(z)=\ee^{Bz^2}\rho(z)$. We apply the lemma to this Gaussian quotient,
after an affine change of variable, with $\alpha=\pi/2-\theta$,
where $\theta\in(0,\pi/4)$ is fixed. Write
$p_\theta=\pi/(\pi-2\theta)\in(1,2)$; this exponent can be taken
arbitrarily close to $1$.
We must establish holomorphicity, a global upper bound for $q$ on a
translated sector, and a lower bound at a single point.

 \subsection[Holomorphicity of rho(d1)]{Holomorphicity of $d_1\mapsto\rho(d_1).$} \label{subsec:holomorphicity_of_rho}
To prove the theorem, we first rewrite the hopping coefficient into the form we can easily manipulate. 
\begin{claim}
    For $d_1>a$, put
    \eq{
    \nu=\frac12-\frac{e_0}{2B},\qquad
    Q_{d_1}(x,y)=(x_1-y_1-2d_1)^2+(x_2-y_2)^2.
    }
    Then the hopping coefficient can be written as
\eql{\label{eq:def_hopping_coef}
 \rho(d_1)&=-\frac{\lambda^4\Gamma(\nu)}{4\pi}
 \int_{\RR^2}\int_{\RR^2}
 \overline{v(x)\varphi(x)}v(y)\varphi(y)
 \ee^{-\ii\frac B2x\wedge y}\ee^{\ii B(x_2+y_2)d_1}\nonumber\\
 &\qquad\times U\left(\nu,1,\frac B2Q_{d_1}(x,y)\right)
 \ee^{-\frac B4Q_{d_1}(x,y)}\,dy\,dx,
}
where $U$ is Tricomi's confluent hypergeometric function,
\eq{
 U(\nu,1,z)=\frac1{\Gamma(\nu)}\int_0^\infty
 \ee^{-tz}t^{\nu-1}(1+t)^{-\nu}\,dt
 \qquad(\nu>0,\ \operatorname{Re}z>0),
}
continued analytically as in \cref{appendix:tricomy}.
\end{claim}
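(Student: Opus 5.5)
The plan is to express $\varphi_R$ through the single-well eigenvalue equation and the magnetic Landau resolvent, and then substitute that expression into the definition of $\rho$. Since $h\varphi=e_0\varphi$ and magnetic translations commute with the free Landau Hamiltonian $H_0=(P-\frac B2X^\perp)^2$, we have $(H_0-e_0)\varphi_R=-\lambda^2v_R\varphi_R$. Because $e_0<0<B=\inf\sigma(H_0)$, $e_0$ lies in the resolvent set of $H_0$. Hence
\eq{
\varphi_R=-\lambda^2(H_0-e_0)^{-1}v_R\varphi_R .
}
Inserting this into $\rho(d_1)=\lambda^2\ip{v_L\varphi_L}{\varphi_R}$ gives
\eq{
\rho(d_1)=-\lambda^4\int\int\overline{v_L(x)\varphi_L(x)}\,G(x,y)\,v_R(y)\varphi_R(y)\,dy\,dx,
}
where $G$ is the resolvent kernel of $H_0$ at $e_0$.

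Next, insert the explicit kernel from \cite[Lemma 5.1]{Korotyaev2003ATF}. In the symmetric gauge it reads
\eq{
G(x,y)=\frac{\Gamma(\nu)}{4\pi}\ee^{\ii\frac B2 x\wedge y}\ee^{-\frac B4|x-y|^2}U\br{\nu,1,\tfrac B2|x-y|^2},\qquad \nu=\frac12-\frac{e_0}{2B}.
}
The sign of the phase has to be checked against the paper's convention $X^\perp=(-X_2,X_1)$. The constant can be verified independently: the kernel is $\int_0^\infty \ee^{te_0}\ee^{-tH_0}\,dt$ with the Mehler heat kernel $\frac{B}{4\pi\sinh(Bt)}\ee^{-\frac B4\coth(Bt)|x-y|^2}$ times the phase. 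Substituting $s=\ee^{-2Bt}$ and comparing with the integral formula for $U$ gives exactly $\frac{\Gamma(\nu)}{4\pi}$.

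Then change variables $x\mapsto x-d$ and $y\mapsto y+d$, so that both integration variables lie in the support of $v$. With $\varphi_L(x-d)=\ee^{\ii\frac B2 (x-d)\wedge d}\varphi(x)$ and $\varphi_R(y+d)=\ee^{-\ii\frac B2(y+d)\wedge d}\varphi(y)$, one has $|x-y-2d|^2=Q_{d_1}(x,y)$. It remains to collect the three phases:
\eq{
\ee^{-\ii\frac B2 (x-d)\wedge d},\qquad \ee^{\ii\frac B2(x-d)\wedge(y+d)},\qquad \ee^{-\ii\frac B2 (y+d)\wedge d}.
}
The first comes from the conjugate of $\varphi_L$, the second from the kernel, and the third from $\varphi_R$. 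Expanding with $d\wedge d=0$ and $x\wedge d=x_1\cdot0-x_2d_1$, the total phase should reduce to $\ee^{\pm\ii\frac B2x\wedge y}\ee^{\ii B(x_2+y_2)d_1}$, up to the sign convention. I expect this bookkeeping to be the main obstacle. The orientation of $\wedge$, the gauge sign in Korotyaev's kernel (whose convention may use $x\wedge y$ or $y\wedge x$), and the conjugation on $\varphi_L$ must all agree to produce exactly the stated factors $\ee^{-\ii\frac B2x\wedge y}\ee^{\ii B(x_2+y_2)d_1}$. I would fix the convention by checking that the resulting kernel satisfies $(H_0-e_0)G=\delta$ directly.

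Finally, justify the manipulations. Since $d_1>a$, we have $Q_{d_1}>0$ on $\operatorname{supp}v\times\operatorname{supp}v$, so the logarithmic singularity of $U(\nu,1,\cdot)$ at $0$ is avoided. The integrand is therefore bounded and compactly supported, and Fubini applies.
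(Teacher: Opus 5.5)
Your outline is the same as the paper's proof: the resolvent identity for $\varphi_R$, Korotyaev's kernel, the shifts $x\mapsto x-d$, $y\mapsto y+d$, and collecting phases. Your Mehler-kernel check of the constant $\Gamma(\nu)/(4\pi)$ and the Fubini remark are correct additions.

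The gap is in the one step you defer: the kernel you quote has the wrong phase for the paper's conventions. Since $A(x)=\frac B2x^\perp$ and $x\wedge y=x_1y_2-x_2y_1$, one has $\nabla_x\br{-\frac B2\,x\wedge y}=A(y)$. Because $A$ is linear, this gives $(-\ii\nabla_x-A(x))\bigl[\ee^{-\ii\frac B2x\wedge y}g(x-y)\bigr]=\ee^{-\ii\frac B2x\wedge y}\bigl[(-\ii\nabla-A)g\bigr](x-y)$. Taking $g$ to be the radial solution of $(H_0-e_0)g=\delta_0$ shows that the kernel is $\ee^{-\ii\frac B2x\wedge y}g(x-y)$, exactly \eqref{eq:landau_resolvent_kernel}. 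The lowest-Landau-level projection kernel $\frac{B}{2\pi}\ee^{-\frac B4\abs{x-y}^2}\ee^{-\ii\frac B2x\wedge y}$ confirms the same sign.

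This sign is not a harmless convention. With your $\ee^{+\ii\frac B2x\wedge y}$, the three phases you list multiply to $\ee^{\ii\frac B2x_2d_1}\,\ee^{\ii\frac B2(x\wedge y-d_1(x_2+y_2))}\,\ee^{\ii\frac B2y_2d_1}=\ee^{\ii\frac B2x\wedge y}$. The $d_1$-dependent oscillation cancels completely, so you would obtain a different integrand, not the claimed one \emph{up to the sign convention}. With the correct sign, $(x-d)\wedge(y+d)=x\wedge y-d_1(x_2+y_2)$ makes the kernel contribute $\ee^{-\ii\frac B2x\wedge y}\ee^{\ii\frac B2d_1(x_2+y_2)}$. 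The two translation phases contribute another $\ee^{\ii\frac B2d_1(x_2+y_2)}$, and the product is exactly $\ee^{-\ii\frac B2x\wedge y}\ee^{\ii B(x_2+y_2)d_1}$, which is the paper's computation. Your proposed check $(H_0-e_0)G=\delta$ would catch this. As written, though, the proposal asserts the wrong kernel and misjudges what the sign controls. Replacing it with the gauge-covariance computation above closes the argument.
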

\begin{proof}
Let $\varphi$ be the normalized ground state of the single-well Hamiltonian
$h$, with eigenvalue $e_0<0$. The Landau Hamiltonian is
$H^{\mathrm{Landau}}_\lambda=(P-\frac12BX^\perp)^2$.
With $x\wedge y=x_1y_2-x_2y_1$, its resolvent at $e_0$ has kernel
\eql{\label{eq:landau_resolvent_kernel}
 G_{\mathrm{Landau}}(x,y)
 =\frac{\Gamma(\nu)}{4\pi}\ee^{-\ii\frac B2x\wedge y}
 U\left(\nu,1,\frac B2\norm{x-y}^2\right)
 \ee^{-\frac B4\norm{x-y}^2};
}
see \cite[Lemma 5.1]{Korotyaev2003ATF}.
Since $-\lambda^2v_R\varphi_R=(H^{\mathrm{Landau}}_\lambda-e_0)\varphi_R$,
\eq{
 \varphi_R(x)=-\lambda^2\int_{\RR^2}
 G_{\mathrm{Landau}}(x,y)v_R(y)\varphi_R(y)\,dy.
}
We use magnetic translations
\eq{
 (\hat R^u f)(x)=\ee^{-\ii\frac B2x\wedge u}f(x-u),
 \qquad \varphi_L=\hat R^{-d}\varphi,\quad
 \varphi_R=\hat R^d\varphi,
}
as in \cite{Zak1964,SIAM2022_FSW}. Substitution into the hopping coefficient
and translation of the integration variables give
\eq{
 \rho(d_1)&:=\lambda^2\int_{\RR^2}\overline{v_L(x)\varphi_L(x)}\varphi_R(x)\,dx\\
 &=-\lambda^4\iint
 \overline{v(x)\varphi(x)}v(y)\varphi(y)
 \ee^{\ii\frac B2(x_2+y_2)d_1}
 G_{\mathrm{Landau}}(x-d,y+d)\,dy\,dx.
}
Finally, $(x-d)\wedge(y+d)=x\wedge y-d_1(x_2+y_2)$,
which yields \eqref{eq:def_hopping_coef}.
\end{proof}

The integral representation yields the following holomorphic continuation.
\begin{prop}
The function $d_1\mapsto\rho(d_1)$ extends holomorphically to the
half-plane $\HH_a:=\{z\in\CC:\Re{z}>a\}$, where
$\operatorname{supp}(v)\subset B_a$.
\end{prop}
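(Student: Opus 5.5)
The plan is to define the continuation directly by the integral \eqref{eq:def_hopping_coef}, replacing $d_1$ by $z\in\HH_a$, and to show that the resulting parameter integral is holomorphic. Write
\eq{
 Q_z(x,y)=(x_1-y_1-2z)^2+(x_2-y_2)^2 .
}
The integrand is
\eq{
 I(x,y,z)=\overline{v(x)\varphi(x)}v(y)\varphi(y)\ee^{-\ii\frac B2x\wedge y}\ee^{\ii B(x_2+y_2)z}U\left(\nu,1,\tfrac B2Q_z(x,y)\right)\ee^{-\frac B4Q_z(x,y)} .
}
Because $v$ is supported in $B_a$, the integration runs only over the compact set $K=\overline{B_a}\times\overline{B_a}$. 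On $K$ the factors $\overline{v\varphi}$ and $v\varphi$ are bounded, since $\varphi$ is smooth (elliptic regularity) and in fact only $L^2$ integrability is needed. The exponentials $\ee^{\ii B(x_2+y_2)z}$ and $\ee^{-\frac B4Q_z}$ are entire in $z$. So the only issue is the factor $U(\nu,1,\frac B2 Q_z)$.

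\textbf{Step 1 (avoiding the branch cut).} For $x,y\in B_a$ and $\Re{z}>a$, put $w=2z-(x_1-y_1)$. Then
\eq{
 \Re{w}>2a-\abs{x_1-y_1}>0,
}
so $w^2$ lies in $\CC\setminus(-\infty,0]$. Adding the real number $(x_2-y_2)^2\ge0$ keeps $Q_z=w^2+(x_2-y_2)^2$ in $\CC\setminus(-\infty,0]$. Indeed, if $\Im{w^2}=0$ then $w$ is real, hence positive, and $w^2>0$. Therefore $z\mapsto U(\nu,1,\frac B2Q_z(x,y))$ is a composition of holomorphic maps, using the continuation of $U(\nu,1,\cdot)$ to $\CC\setminus(-\infty,0]$ from \cref{appendix:tricomy}. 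This makes $I(x,y,\cdot)$ holomorphic on $\HH_a$ for each fixed $(x,y)\in K$. Joint measurability, indeed continuity in $(x,y,z)$ off the zero set of $v$, is clear.

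\textbf{Step 2 (local uniform bound).} Fix a compact $L\subset\HH_a$. I claim that $\{\frac B2Q_z(x,y):(x,y)\in K,\ z\in L\}$ is a compact subset of $\CC\setminus(-\infty,0]$. It is the continuous image of the compact set $K\times L$, and by Step 1 it misses the cut. Here one should use the closed ball $\overline{B_a}$ together with $\Re{z}\ge a+\epsilon$ on $L$, so that $\Re{w}\ge2\epsilon$ and the image stays off $(-\infty,0]$, including $0$. Since $U(\nu,1,\cdot)$ is continuous on $\CC\setminus(-\infty,0]$, it is bounded on this compact set. Hence $\sup_{K\times L}\abs{I}/\abs{v(x)\varphi(x)v(y)\varphi(y)}<\infty$. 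Since $v\varphi\in L^1(B_a)$, the integrand is dominated uniformly in $z\in L$ by an integrable function.

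\textbf{Step 3 (conclusion).} By Morera's theorem combined with Fubini, or equivalently by the standard theorem on holomorphic dependence of parameter integrals, the function $\rho(z)$ defined by \eqref{eq:def_hopping_coef} is holomorphic on $\HH_a$. By the Claim, it agrees with the hopping coefficient for real $z=d_1>a$.

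The main obstacle is Step 2: ensuring that the argument of $U$ stays in a compact set away from the logarithmic singularity at $0$ and from the branch cut. If the appendix's continuation of $U$ is only to a sector $\abs{\Arg{\zeta}}<\pi$, this is exactly what is needed. One should check that $\abs{\Arg{Q_z}}<\pi$ uniformly on $K\times L$, which follows from compactness as above.
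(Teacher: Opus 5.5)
Your proof is correct and follows essentially the same route as the paper: define the continuation by the integral formula, note that every factor except $U$ is entire, show that $\frac B2 Q_z(x,y)$ avoids $(-\infty,0]$ when $\operatorname{Re} z>a$, and conclude by local uniform domination. The paper shows the avoidance by computing $\operatorname{Im} Q_z=-4(x_1-y_1-2\operatorname{Re} z)\operatorname{Im} z$, while you use $\operatorname{Re} w>0$; your compactness argument in Step 2 also spells out the ``locally uniformly dominated'' step that the paper asserts without detail.
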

\begin{proof}
    We extend the integrand in the most natural way. Because $v$ is compactly supported, making the integral finite, it is enough to show that the integrand is holomorphic in $d_1$. Observe that each factor of the integrand except the Tricomi's confluent hypergeometric function is explicitly entire. 
    
 It is well known that $U(\nu,1,z)$ is holomorphic for $\nu \in \CC$ and $z \in \CC - (-\infty,0]$ (see \cref{appendix:tricomy}). Therefore, as long as $\frac{b\lambda}{2}((x_1-y_1-2d_1)^2+(x_2-y_2)^2) \notin (-\infty,0]$, we have the holomorphicity. Indeed, for the sake of contradiction, suppose $((x_1-y_1-2d_1)^2+(x_2-y_2)^2) \leq 0.$ Then its imaginary part must be zero which implies that \eq{-4(x_1-y_1-2\text{Re}(d_1))\text{Im}(d_1) = 0} which in turn means either $x_1-y_1-2\text{Re}(d_1)=0$ or $\text{Im}(d_1) = 0$. We know that the first option is not possible because $x_1-y_1 \leq 2a$, yet by construction $\text{Re}(d_1) > a$ and thus $2\text{Re}(d_1) > 2a\geq x_1-y_1.$ If $\text{Im}(d_1) = 0,$ then $\text{Re}((x_1-y_1-2d_1)^2+(x_2-y_2)^2) = (x_1-y_1-2\text{Re}(d_1))^2 + (x_2-y_2)^2 >0.$ Therefore, we have that $\frac{b\lambda}{2}((x_1-y_1-2d_1)^2+(x_2-y_2)^2) \notin (-\infty,0]$ for all choice of $d_1 \in \HH_a, $ and $x,y \in B_a.$

Therefore, the integrand is holomorphic for all $x,y \in B_a$ and is locally uniformly dominated on $\HH_a$. Hence $\rho(d_1)$, and therefore $q(d_1)$, is holomorphic on $\HH_a$.
\end{proof}

With this, we have established holomorphicity of $\rho(d_1).$ Next, we study the upper bound of $\rho(d_1).$

\subsection[Global Upper Bound on the Gaussian Quotient]{Global Upper Bound on the Gaussian Quotient}\label{sec:global_upper_bound_rho}
We retain the exact Gaussian and estimate
$q(z)=\ee^{Bz^2}\rho(z)$. Fix $D_0>a$ and use the translated sector
\eq{
 \Gamma_{\alpha,D_0}:=D_0+\Gamma_\alpha,\qquad \alpha=\pi/2-\theta.
}
\begin{prop}\label{lem:global_ub}
For each fixed $b\in(0,b_\star)$ and $\theta\in(0,\pi/4)$ there exist
$M\geq1$ and $c>0$ such that
\eql{\label{eq:quotient_ub}
 \abs{q(z)}\leq M \ee^{c\operatorname{Re}z}
 \qquad(z\in\Gamma_{\alpha,D_0}).
}
\end{prop}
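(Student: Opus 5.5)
The plan is to estimate the integrand of \eqref{eq:def_hopping_coef} pointwise in modulus, for $z=s+\ii t\in\Gamma_{\alpha,D_0}$ and $x,y\in B_a$, and then integrate against $\abs{v(x)\varphi(x)}\,\abs{v(y)\varphi(y)}$. This gives the factor $\norm{v\varphi}_{L^1}^2<\infty$, since $v$ is bounded and compactly supported and $\varphi\in L^2$. Write $w=x_1-y_1$ and $r=x_2-y_2$, so $\abs{w},\abs{r}\le 2a$, and set $Q_z(x,y)=(w-2z)^2+r^2$. The factor $\ee^{-\ii\frac B2 x\wedge y}$ has modulus one, because $x$ and $y$ are real. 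The remaining factors are the Gaussian $\ee^{Bz^2}$, the phase $\ee^{\ii B(x_2+y_2)z}$, the Gaussian $\ee^{-\frac B4 Q_z}$, and the Tricomi factor $U(\nu,1,\frac B2 Q_z)$. I treat them in turn.

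\emph{Step 1: the Gaussians cancel.} The key observation is that the $t^2$ terms cancel exactly. We have $\operatorname{Re} z^2=s^2-t^2$ and $\operatorname{Re}Q_z=(w-2s)^2-4t^2+r^2$. Hence
\eq{
 B(s^2-t^2)-\tfrac B4\big((w-2s)^2-4t^2+r^2\big)=Bsw-\tfrac B4 w^2-\tfrac B4 r^2\le 2aBs .
}
So $\abs{\ee^{Bz^2}\ee^{-\frac B4Q_z}}\le \ee^{2aB\operatorname{Re}z}$. This is the reason for retaining the exact Gaussian in $q$.

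\emph{Step 2: the phase.} We have $\abs{\ee^{\ii B(x_2+y_2)z}}=\ee^{-B(x_2+y_2)t}\le \ee^{2aB\abs t}$. On the translated sector, $\abs t\le \tan\alpha\,(s-D_0)\le \cot\theta\, s$. Therefore this factor is at most $\ee^{2aB\cot\theta\,\operatorname{Re}z}$.

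\emph{Step 3: the Tricomi factor.} This is the step I expect to need care, since it is where the sector geometry and \cref{appendix:tricomy} enter. Write $2z-w=2(z-D_0)+(2D_0-w)$. Here $2D_0-w\ge 2D_0-2a>0$ and $\abs{\operatorname{Arg}(z-D_0)}<\alpha$, so $2z-w$ lies in the closed cone $\{\abs{\operatorname{Arg}}\le\alpha\}$. Consequently $(2z-w)^2$ lies in $\{\abs{\operatorname{Arg}}\le\pi-2\theta\}$. Adding $r^2\ge0$ keeps $Q_z$ in this convex cone.

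Because the cone has opening strictly less than $\pi$, the sum of two elements has modulus comparable to the sum of their moduli. Hence $\abs{Q_z}\ge c_\theta\abs{2z-w}^2\ge c_\theta(2D_0-2a)^2>0$. Thus $\zeta=\frac B2Q_z$ ranges over $\{\abs{\operatorname{Arg}\zeta}\le\pi-2\theta,\ \abs\zeta\ge\zeta_0\}$ with $\zeta_0>0$. This region stays away both from the branch cut and from the logarithmic singularity at $0$.

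On this region the sector bounds recalled in \cref{appendix:tricomy} give $U(\nu,1,\zeta)\sim\zeta^{-\nu}$ uniformly as $\abs\zeta\to\infty$. Here $\nu=\frac12-\frac{e_0}{2B}>\frac12$ is real. Together with continuity of $U$ on the compact part of the region, this yields $\abs{U(\nu,1,\zeta)}\le C_{\theta}$.

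\emph{Step 4: conclusion.} Combining Steps 1--3 gives
\eq{
 \abs{q(z)}\le \frac{\lambda^4\Gamma(\nu)}{4\pi}\,C_\theta\,\norm{v\varphi}_{L^1}^2\,\ee^{2aB(1+\cot\theta)\operatorname{Re}z}.
}
This is \eqref{eq:quotient_ub} with $c=2aB(1+\cot\theta)$ and $M=\max\big(1,\frac{\lambda^4\Gamma(\nu)}{4\pi}C_\theta\norm{v\varphi}_{L^1}^2\big)$.
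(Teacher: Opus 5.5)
Your proof is correct and follows the paper's route: a pointwise bound on the integrand of \eqref{eq:def_hopping_coef}, the exact cancellation $\ee^{Bz^2}\ee^{-\frac B4Q_z}=\ee^{B(x_1-y_1)z}\ee^{-\frac B4\norm{x-y}^2}$, and the sector bound of \cref{prop:tricomi_sector_bound}. The one difference is harmless: you place $\frac B2Q_z$ in a fixed sector with $\abs{\zeta}\geq\zeta_0$ uniformly on all of $\Gamma_{\alpha,D_0}$, rather than using $\zeta_z=2Bz^2(1+O(\abs{z}^{-1}))$ for large $\abs z$ plus local boundedness, which in fact lets you apply \cref{prop:tricomi_sector_bound} with $\eta=2\theta$ directly, with no asymptotic or compactness step.

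One justification in Step~3 is wrong as written. Since $\pi-2\theta>\pi/2$, the cone $\{\abs{\operatorname{Arg}}\leq\pi-2\theta\}$ is not convex and contains antipodal pairs such as $\ee^{\ii(\pi-2\theta)}=-\ee^{-2\ii\theta}$. What you actually use is that the two specific summands $(2z-w)^2$ and $r^2\geq0$ meet at an angle at most $\pi-2\theta<\pi$, which gives $\abs{\operatorname{Arg}Q_z}\leq\pi-2\theta$ and $\abs{Q_z}\geq\sin\theta\,(\abs{2z-w}^2+r^2)$.
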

\begin{proof}
The Gaussian in \eqref{eq:def_hopping_coef} factors as
\eq{
 \ee^{-\frac B4Q_z(x,y)}
 =\ee^{-Bz^2}\ee^{B(x_1-y_1)z}\ee^{-\frac B4\norm{x-y}^2}.
}
For fixed $\nu=\frac12-\frac{e_0}{2B}>0$, put
\eq{
 \zeta_z(x,y)=\frac B2Q_z(x,y).
}
Uniformly for $x,y\in B_a$, we have
\eq{
 \zeta_z(x,y)
 =2Bz^2\left(1+O(\abs{z}^{-1})\right)
 \qquad(\abs{z}\to\infty).
}
Since $z\in\Gamma_{\alpha,D_0}$ and
$\alpha=\pi/2-\theta$, we have $\abs{\arg z}<\pi/2-\theta$.
Consequently, for sufficiently large $\abs{z}$,
\eq{
 \abs{\arg\zeta_z(x,y)}\leq\pi-\theta,
 \qquad
 \abs{\zeta_z(x,y)}\geq B\abs{z}^2,
}
uniformly for $x,y\in B_a$. Applying
\cref{prop:tricomi_sector_bound} with $a=\nu$
and $\eta=\theta$ gives
\eq{
 \abs{U\left(\nu,1,\frac B2Q_z(x,y)\right)}
 \leq
 \frac{\abs{\zeta_z(x,y)}^{-\nu}}{(\sin(\theta/2))^\nu}
 \leq C_{\nu,B,\theta}\abs{z}^{-2\nu}.
}
This gives, for large $\abs{z}$ in our sector,
\eql{\label{eq:rho_UB_estimate}
 \abs{q(z)}&\leq C\abs{z}^{-2\nu}\iint
 \abs{v(x)\varphi(x)v(y)\varphi(y)}
 \abs{\ee^{B(x_1-y_1)z+\ii B(x_2+y_2)z}}\,dy\,dx\nonumber\\
 &\leq C\abs{z}^{-2\nu}\ee^{2aB\abs{z}}.
}
Here the last inequality follows by maximizing the real part of the
linear exponent over $x,y\in B_a$. On $\Gamma_{\alpha,D_0}$,
$\abs{z}\leq(\sin\theta)^{-1}\operatorname{Re}z$. The remaining bounded part of the sector
lies in $\operatorname{Re}z>a$, where the integral is locally uniformly
bounded. Enlarging $M$ gives \eqref{eq:quotient_ub}, with
$c=2aB/\sin\theta$. In particular, $q(z)/(M\ee^{cz})$ is a bounded
holomorphic function on $\Gamma_{\alpha,D_0}$.
\end{proof}

We also record the corresponding decay estimate for the single-well
ground state. There exist $C,c>0$, depending on the fixed $b,\lambda,v$,
such that
\eql{\label{eq:single_well_decay}
 \abs{\varphi(x)}\leq C\ee^{-\frac B4\norm{x}^2+c\norm{x}}
 \qquad(x\in\RR^2).
}
Indeed, the single-well eigenvalue equation gives
\eq{
 \varphi(x)=-\lambda^2\int_{B_a(0)}
 G_{\mathrm{Landau}}(x,y)v(y)\varphi(y)\,\dif y.
}
For $\nu>0$ and $s>0$, the integral representation of $U$ implies
\eq{
 0<U(\nu,1,s)
 \leq \frac{1}{\Gamma(\nu)}
 \int_0^\infty \ee^{-st}t^{\nu-1}\,\dif t
 =s^{-\nu}.
}
Thus, by \cref{eq:landau_resolvent_kernel}, for $\norm{x}\geq a+1$,
\eq{
 \abs{\varphi(x)}
 \leq C\int_{B_a(0)}
 \norm{x-y}^{-2\nu}\ee^{-\frac B4\norm{x-y}^2}
 \abs{v(y)\varphi(y)}\,\dif y
 \leq C\ee^{-\frac B4(\norm{x}-a)^2}.
}
This gives \cref{eq:single_well_decay} for large $\norm{x}$.
Since $\varphi$ is locally bounded, increasing $C$ extends the
estimate to all $x\in\RR^2$.

\subsection{Lower Bound at a Point}

Next, we find a lower bound at a single point.  To do this, we first study the asymptotics of $\rho$ when $b=0$, i.e., the non-magnetic case. Then, by continuity of $\rho$ in $b$, we are guaranteed to have a lower bound up to a factor for small enough $b.$
\begin{lem}
For the nonmagnetic system, as $d_1\to\infty$,
\eq{
 \abs{\rho_{b=0}(d_1)}\sim C_{v,\lambda}
 \frac{\lambda^4\ee^{-2\sqrt{\abs{e_0^{b=0}}}d_1}}{\sqrt{d_1}},
 \qquad C_{v,\lambda}>0.
}
\end{lem}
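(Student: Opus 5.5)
The plan is to redo the computation of the Claim at $b=0$, where the Landau resolvent becomes the free resolvent $(-\Delta+\kappa^2)^{-1}$ with $\kappa=\sqrt{\abs{e_0^{b=0}}}$. Its kernel is $\frac{1}{2\pi}K_0(\kappa\norm{x-y})$. At $b=0$ the magnetic translations are ordinary translations, and $\varphi$ may be taken real and positive as the nonmagnetic ground state. The identity $-\lambda^2v_R\varphi_R=(-\Delta+\kappa^2)\varphi_R$ then gives
\eq{
 \rho_{b=0}(d_1)=-\frac{\lambda^4}{2\pi}\iint v(x)\varphi(x)v(y)\varphi(y)\,K_0\br{\kappa\norm{x-y-2d}}\,dy\,dx .
}
Since $v\leq0$, $v\neq0$ and $\varphi>0$, the weight $w(x):=v(x)\varphi(x)$ is nonpositive and not identically zero. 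Hence $w(x)w(y)\geq0$, and the integrand has a definite sign because $K_0>0$. This positivity is the key point: no cancellation can occur.

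Next I will insert the large-argument asymptotics $K_0(r)=\sqrt{\pi/(2r)}\,\ee^{-r}(1+O(r^{-1}))$. For $x,y\in B_a$,
\eq{
 \norm{x-y-2d}=2d_1-(x_1-y_1)+O(d_1^{-1}),
}
uniformly in $x,y$. Consequently
\eq{
 K_0\br{\kappa\norm{x-y-2d}}=\sqrt{\frac{\pi}{4\kappa d_1}}\,\ee^{-2\kappa d_1}\,\ee^{\kappa(x_1-y_1)}\br{1+O(d_1^{-1})},
}
again uniformly on the compact supports. Integrating this expansion gives
\eq{
 \abs{\rho_{b=0}(d_1)}=\frac{\lambda^4}{2\pi}\sqrt{\frac{\pi}{4\kappa}}\,\frac{\ee^{-2\kappa d_1}}{\sqrt{d_1}}\br{\int w(x)\ee^{\kappa x_1}dx}\br{\int w(y)\ee^{-\kappa y_1}dy}\br{1+o(1)}.
}
The product is positive because each factor is a strictly negative number. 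This identifies the constant as
\eq{
 C_{v,\lambda}=\frac{1}{4\sqrt{\pi\kappa}}\int w\,\ee^{\kappa x_1}\int w\,\ee^{-\kappa y_1}>0 .
}

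The only delicate steps are justifying the positivity and the uniformity. Positivity of the ground state, after choosing its phase, is standard for $b=0$: the ground state of a Schr\"odinger operator is simple and can be taken strictly positive. The uniform error bound follows from compactness of $B_a\times B_a$ together with the explicit asymptotics of $K_0$. I expect the main point of care to be the consistency with the normalization conventions of \eqref{eq:def_hopping_coef} as $b\to0$. One can either check directly that $\Gamma(\nu)U(\nu,1,\tfrac B2 r^2)\ee^{-Br^2/4}/(4\pi)\to K_0(\kappa r)/(2\pi)$, or simply redo the derivation at $b=0$ as above. The latter avoids the limit entirely and is the route I would take.
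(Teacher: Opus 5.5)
Your proposal is correct and follows the paper's proof essentially step for step. Both arguments use the free resolvent kernel $\frac{1}{2\pi}K_0(\kappa\norm{x-y})$, the translated double integral with the sign-definite weight $w=v\varphi\leq0$, the uniform expansion $\norm{x-y-2d}=2d_1-(x_1-y_1)+O(d_1^{-1})$ on $B_a\times B_a$, and the large-argument asymptotics of $K_0$. Your explicit constant $C_{v,\lambda}=\frac{1}{4\sqrt{\pi\kappa}}\int w\,\ee^{\kappa x_1}\,dx\int w\,\ee^{-\kappa y_1}\,dy$ makes precise the positivity of the leading constant, which the paper only asserts.
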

\begin{proof}
The single-well Hamiltonian is $h_{b=0}=-\Delta+\lambda^2v$.
Put $\kappa=\sqrt{\abs{e_0^{b=0}}}$ and choose its normalized ground state
$\varphi_{b=0}$ strictly positive. The equation
$(-\Delta+\kappa^2)\varphi_{b=0}=-\lambda^2v\varphi_{b=0}$ gives
\eql{\label{eq:ground_state_b=0}
 \varphi_{b=0}(x)=-\lambda^2\int G_{b=0}(x,y)
 v(y)\varphi_{b=0}(y)\,dy,\qquad
 G_{b=0}(x,y)=\frac1{2\pi}K_0(\kappa\norm{x-y}).
}
As $r\to\infty$,
\eq{
 K_0(r)\sim\sqrt{\frac\pi{2r}}\ee^{-r};
}
see \cite[Equation 9.7.2]{abramowitz1972handbook}.
Substitution into the hopping coefficient and translation of the variables
give
\eq{
 \rho_{b=0}(d_1)=-\frac{\lambda^4}{2\pi}\iint
 v(x)v(y)\varphi_{b=0}(x)\varphi_{b=0}(y)
 K_0(\kappa\norm{x-y-2d})\,dy\,dx.
}
Uniformly on $x,y\in B_a$,
$\norm{x-y-2d}=2d_1-(x_1-y_1)+O(d_1^{-1})$.
The Bessel asymptotic therefore gives the claimed formula; its leading
constant is positive because $v\varphi_{b=0}\leq0$ is nonzero.
\end{proof}

This provides the lower bound at one fixed point that we need.
\begin{prop}\label{prop:lowerbound_single_point}
Fix a real $d_1^\star>D_0$. There exists $b_\star>0$ such that
\eql{
 \abs{\rho(d_1^\star,b)}\geq\tfrac12\abs{\rho(d_1^\star,0)}>0
 \qquad(0<b<b_\star).
}
\end{prop}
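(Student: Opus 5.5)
The plan is to prove that $b\mapsto\rho(d_1^\star,b)$ is continuous at $b=0$ and then use $\rho(d_1^\star,0)\neq0$. Nonvanishing at $b=0$ comes from the preceding lemma together with positivity. Indeed, the representation
\eq{
 \rho_{b=0}(d_1)=-\frac{\lambda^4}{2\pi}\iint v(x)v(y)\varphi_{b=0}(x)\varphi_{b=0}(y)K_0(\kappa\norm{x-y-2d})\,dy\,dx
}
has an integrand of fixed sign: $v\varphi_{b=0}\leq0$ is nonzero, and $K_0>0$ on $(0,\infty)$. It is therefore nonzero at every $d_1>a$, in particular at the fixed $d_1^\star$, and not only asymptotically. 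Once continuity is known, there is $b_\star>0$ with $\abs{\rho(d_1^\star,b)-\rho(d_1^\star,0)}\leq\frac12\abs{\rho(d_1^\star,0)}$ for $0<b<b_\star$, and the claim follows. We may shrink the $b_\star$ of the introduction accordingly.

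For continuity I would work directly with the definition $\rho=\lambda^2\int\overline{v_L\varphi_L}\varphi_R$ rather than with the Tricomi representation, whose prefactor $\Gamma(\nu)$ degenerates as $B\to0$. The integration region $x\in B_a(-d)$ is compact, and the phases $\ee^{-\ii\frac{b\lambda}{2}x\wedge u}$ in $\hat R^{\pm d}$ converge uniformly there to $1$. It therefore suffices to show:
\begin{itemize}
 \item $\varphi^{(b)}\to\varphi_{b=0}$ in $L^2_{\mathrm{loc}}$, uniformly on the compact set $B_a(0)\cup B_a(\pm 2d)$ (or at least in $L^2$ of it);
 \item the $L^\infty$ bound on $v$ controls the product.
\end{itemize}
Cauchy--Schwarz then gives $\rho(d_1^\star,b)\to\rho(d_1^\star,0)$.

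The main step is convergence of the ground state as $b\to0$. I would proceed as follows.
\begin{enumerate}
 \item Show that $h_b\to h_0$ in the norm resolvent sense on $L^2(\RR^2)$. I expect this to be the main obstacle, since the perturbation $bA\cdot P+b^2\abs{A}^2$ has the unbounded vector potential $A=\frac{\lambda}{2}X^\perp$, so norm-resolvent convergence fails globally.
 \item As a substitute, use strong resolvent convergence, which holds on the common core $C_c^\infty$. The convergence $b\lambda A\,f\to0$ there is trivial.
 \item Add a uniform localization of the ground state. The variational characterization together with diamagnetic inequality gives $e_0(b)\leq e_0(0)+O(b^2)$, using the trial function $\varphi_{b=0}$, which has exponential decay. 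The diamagnetic inequality also gives $e_0(b)\geq e_0(0)$.
 \item Uniform exponential decay of $\varphi^{(b)}$ outside $B_a$ then follows from $e_0(b)\leq e_0(0)/2<0$ via a standard Agmon or Combes--Thomas estimate, again via diamagnetism, with constants independent of small $b$.
 \item Tightness and boundedness in the $H^1$ norm, the latter since $\norm{(P-bA)\varphi^{(b)}}^2\leq\abs{e_0}+\lambda^2\norm{v}_\infty$, give precompactness in $L^2$.
 \item Every limit point is a normalized eigenfunction of $h_0$ at energy $\lim e_0(b)=e_0(0)$. Simplicity of the nonmagnetic ground state and the phase convention (chosen continuously at $b=0$) then force convergence $\varphi^{(b)}\to\varphi_{b=0}$ in $L^2$.
\end{enumerate}

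Finally, $L^2$ convergence on $B_a(0)$ suffices for the integral. The conclusion holds for each fixed real $d_1^\star>D_0$, in particular for one chosen once and for all, as needed to apply \cref{lem:complex_anal}.
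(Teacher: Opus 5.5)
Your argument is correct and follows the paper's route. Positivity of $\varphi_{b=0}$ together with $v\le 0$, $v\not\equiv 0$ gives $\rho(d_1^\star,0)<0$, and continuity of $b\mapsto\rho(d_1^\star,b)$ at the fixed point gives the bound. The paper only asserts this continuity, whereas your steps 2--6 actually prove it: the energy squeeze via diamagnetism and the trial state $\varphi_{b=0}$, $b$-uniform decay, compactness, and simplicity of the $b=0$ ground state. You also rightly work from the definition of $\rho$ rather than the Tricomi representation, whose constants degenerate as $B\to 0$. Your step 1 can simply be dropped, since, as you yourself note, norm-resolvent convergence fails.
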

\begin{proof}
At $b=0$, positivity of the ground state and $v\leq0$, $v\not\equiv0$,
give $\rho(d_1^\star,0)<0$. The assertion follows by continuity at this
fixed point. We fix $d_1^\star$ before choosing $b_\star$.
\end{proof}

\subsection{Applying \cref{lem:complex_anal}.}
Let $\ell=d_1^\star-D_0>0$ and $d_1(z)=D_0+\ell z$, which maps
$\Gamma_\alpha$ onto $\Gamma_{\alpha,D_0}$ and satisfies
$d_1(1)=d_1^\star$. Set
\eq{
 F_1(z)=(q\circ d_1)(z),\qquad
 F_2(z)=M \ee^{c d_1(z)}.
}
By \cref{lem:global_ub,prop:lowerbound_single_point},
$\abs{F_1}\leq\abs{F_2}$ and $F_1(1)\ne0$. Choose $\beta>0$ with
$\abs{F_1(1)}>\ee^{-\beta}$. For real $s>D_0$, put
\eq{
 u(s)=\log M+cs-\log\abs{q(s)}\geq0.
}
\cref{lem:complex_anal}, with $\alpha=\pi/2-\theta$, gives
\eql{\label{eq:lemma_using_last_step}
 \frac1{\varepsilon R}\int_R^{(1+\varepsilon)R}
 u(D_0+\ell t)\,dt\leq C_{\varepsilon,\theta} R^{p_\theta}.
}
Here and below the constants may depend on the fixed $b,\lambda,v,\theta$
and the chosen anchor. To obtain an interval starting at $d_1$, take
$R=(d_1-D_0)/\ell$. For $d_1\geq2D_0$, the image of
$[R,(1+2\varepsilon)R]$ contains $[d_1,(1+\varepsilon)d_1]$.
Using $u\geq0$ and \eqref{eq:lemma_using_last_step} with $2\varepsilon$
therefore gives
\eql{\label{eq:normalized_log_average}
 \frac1{\varepsilon d_1}\int_{d_1}^{(1+\varepsilon)d_1}u(s)\,ds
 \leq C_{\varepsilon,\theta} d_1^{p_\theta}.
}
Since $\log M+cs\geq0$ and $u\br{s}\geq0$, we have
\eq{
 \br{-\log\abs{\rho\br{s}}-Bs^2}_+
 =\br{-\log\abs{q\br{s}}}_+
 \leq u\br{s}.
}
For each fixed $\eta\in\br{0,1}$, choose $\theta>0$ sufficiently small
that $p_\theta\leq1+\eta$. Then \cref{eq:normalized_log_average}
proves \cref{thm:main_avg_lower_bound_hopping}.

Since $-\log\abs{\rho(s)}=Bs^2-\log M-cs+u(s)$, we conclude that
\eql{\label{eq:averaged_gaussian_bound}
 \frac1{\varepsilon d_1}\int_{d_1}^{(1+\varepsilon)d_1}
 -\log\abs{\rho(s)}\,ds
 &\leq B\left(1+\varepsilon+\frac{\varepsilon^2}{3}\right)d_1^2
       +C_{\varepsilon,\theta} d_1^{p_\theta}.
}
For $\varepsilon=1/5$, the quadratic coefficient is $91/75<3/2$.
Since $p_\theta<2$, for each fixed $b>0$ the remainder can thus be
absorbed for sufficiently large $d_1$, giving the weaker averaged bound
stated after \cref{thm:main_avg_lower_bound_hopping}.
By taking $\theta>0$ sufficiently small, the remainder in
\cref{eq:averaged_gaussian_bound} is
$O_{\varepsilon,\eta}(d_1^{1+\eta})$ for every fixed $\eta>0$.

To prove the pointwise statement, fix $\delta\in(0,1)$ and choose
$\theta=\pi\delta/(4+2\delta)$, so that $p_\theta=1+\delta/2$. Define
\eq{
 Z_\delta=\{s>D_0:\abs{\rho(s)}<\ee^{-Bs^2-s^{1+\delta}}\}.
}
On $Z_\delta$, $u(s)\geq s^{1+\delta}$. Applying
\eqref{eq:normalized_log_average} with $\varepsilon=1$ gives
\eq{
 \frac{\abs{Z_\delta\cap[R,2R]}}{R}
 \leq \frac{1}{R^{2+\delta}}\int_R^{2R}u(s)\,ds
 \leq C_{b,\delta}R^{-\delta/2}
 \xrightarrow[R\to\infty]{}0.
}
Summing over dyadic intervals shows that $Z_\delta$ has density zero.
Outside $Z_\delta$, the desired lower bound holds by definition, proving \cref{thm:main_hopping} after increasing $D$ if needed and
restricting $Z_\delta$ to $[D,\infty)$. Together with
\eqref{eq:quotient_ub}, this also gives
\eq{
 \log\abs{\rho(d_1)}=-Bd_1^2+O_\delta(d_1^{1+\delta})
 \qquad(d_1\to\infty,\ d_1\notin Z_\delta).
}

If one retains the weaker threshold, let
\eq{
 \widetilde Z=\{s>D_0:\abs{\rho(s)}<\ee^{-(3/2)Bs^2}\}.
}
For any fixed $\theta\in(0,\pi/4)$, on $\widetilde Z$ we have
$u(s)\geq\frac12Bs^2$, and the same argument gives
$\abs{\widetilde Z\cap[R,2R]}\leq C_{b,\theta}R^{p_\theta-1}$.
Since $p_\theta\downarrow1$ as $\theta\downarrow0$, dyadic summation gives
$\abs{\widetilde Z\cap[D_0,R]}=O_{b,\eta}(R^\eta)$ for every fixed $\eta>0$.

\section{Proof of \cref{thm:main_eigenvalue_splitting}}
\begin{prop}
For each fixed $b\in(0,b_\star)$ and $\delta\in(0,1)$, along
$d_1\to\infty$ outside the density-zero set $Z_\delta$ from \cref{thm:main_hopping},
\eql{
 \abs{\frac{\Delta(d_1)}{2\abs{\rho(d_1)}}-1}=o(1).
}
\end{prop}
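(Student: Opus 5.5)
The plan is a standard two-state (Feshbach/Grushin) reduction onto $V=\operatorname{span}\{\varphi_L,\varphi_R\}$. We compare every error term with $\abs{\rho(d_1)}$, which for $d_1\notin Z_\delta$ is at least $\ee^{-Bd_1^2-Cd_1^{1+\delta}}$ by \cref{thm:main_hopping}. Each error must therefore be shown to be $o(\abs{\rho})$, which in practice means $O(\ee^{-(1+\gamma)Bd_1^2})$ for some fixed $\gamma>0$; any such bound beats the subquadratic correction.

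\textbf{Step 1: overlaps and the Gram matrix.} We estimate the overlap $\omega=\ip{\varphi_L}{\varphi_R}$ and the matrix elements of $H-e_0$ on $V$. From $(h-e_0)\varphi=0$ and magnetic translation covariance,
\[
(H-e_0)\varphi_R=\lambda^2 v_L\varphi_R,\qquad (H-e_0)\varphi_L=\lambda^2 v_R\varphi_L .
\]
Hence the diagonal elements equal $\lambda^2\int v_L\abs{\varphi_R}^2=O(\ee^{-\frac B2(2d_1-a)^2+O(d_1)})$ by \cref{eq:single_well_decay}, and this is $O(\ee^{-(2-o(1))Bd_1^2})=o(\abs{\rho})$. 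The off-diagonal element is $\rho$ by definition. The overlap $\omega$ is only of size $\ee^{-Bd_1^2+O(d_1)}$, i.e.\ comparable to $\rho$, so it cannot be dropped. However, it enters the orthonormalized matrix only through the products $\omega\cdot(\text{diagonal})$ and $\omega\bar\rho$, which are $O(\ee^{-(2-o(1))Bd_1^2})$. Thus in the Löwdin-orthonormalized basis the $2\times2$ matrix of $H-e_0$ is
\[
\begin{pmatrix}\epsilon_1&\rho\\ \bar\rho&\epsilon_2\end{pmatrix}+O\bigl(\ee^{-(2-o(1))Bd_1^2}\bigr),\qquad \epsilon_j=O\bigl(\ee^{-(2-o(1))Bd_1^2}\bigr).
\]
Its eigenvalues therefore differ by $2\abs{\rho}(1+o(1))$ on the complement of $Z_\delta$.

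\textbf{Step 2: the residual vectors are small.} The residuals $r_{R}=\lambda^2v_L\varphi_R$ and $r_L=\lambda^2 v_R\varphi_L$ satisfy
\[
\norm{r_{L}},\ \norm{r_{R}}\le C\ee^{-\frac B4(2d_1-a)^2+O(d_1)}=\ee^{-(1-o(1))Bd_1^2}.
\]
Let $\Pi$ be the orthogonal projection onto $V$ and $\Pi^\perp=1-\Pi$. Then $\norm{\Pi^\perp(H-e_0)\Pi}\le \ee^{-(1-o(1))Bd_1^2}$.

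\textbf{Step 3: spectral gap on $V^\perp$ (the main obstacle).} We need $\Pi^\perp(H-e_0)\Pi^\perp\ge g>0$ on $V^\perp$, uniformly in $d_1$, together with the statement that $H$ has exactly two eigenvalues near $e_0$. For this we use an IMS partition of unity $J_L^2+J_R^2+J_0^2=1$ adapted to the two wells, with gradients $O(d_1^{-1})$. Since $e_0$ is the simple ground state of $h$, $h$ has a gap above $e_0$, and the Landau part has spectrum $\ge B>e_0$ away from the wells; localization errors are $O(d_1^{-2})$. Min–max then gives that the third eigenvalue of $H$ is $\ge e_0+g$, and that $\Pi^\perp H\Pi^\perp\ge e_0+g$ on $V^\perp$, because $J_L\varphi$-type quasimodes are close to $\varphi_{L,R}$.

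\textbf{Step 4: Schur complement.} For $E$ near $e_0$ the Feshbach map gives that the two lowest eigenvalues of $H$ are the eigenvalues of
\[
\Pi(H-e_0)\Pi-\Pi H\Pi^\perp\bigl(\Pi^\perp(H-E)\Pi^\perp\bigr)^{-1}\Pi^\perp H\Pi .
\]
The correction term has norm $\le g^{-1}\ee^{-2(1-o(1))Bd_1^2}$. This quadratic gain in the residuals is exactly what makes it $o(\abs{\rho})$ off $Z_\delta$. The dependence on $E$ is Lipschitz with a tiny constant, so a fixed-point argument combined with Steps 1 and 4 yields
\[
\Delta(d_1)=2\abs{\rho(d_1)}\bigl(1+O(\ee^{-(1-o(1))Bd_1^2+Cd_1^{1+\delta}})\bigr)=2\abs{\rho(d_1)}(1+o(1)).
\]

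Of these steps, Step 3 is the one requiring genuine care; the rest is bookkeeping of Gaussian exponents against \cref{thm:main_hopping}.
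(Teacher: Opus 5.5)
Your proposal is correct and follows essentially the same route as the paper. It has residuals $r_{L,R}$ of size $\ee^{-Bd_1^2+O(d_1)}$, diagonal terms and Schur-complement corrections of size $\ee^{-2Bd_1^2+O(d_1)}$, an IMS-localized gap on $V^\perp$, and a comparison with the lower bound of \cref{thm:main_hopping} off $Z_\delta$; your $d_1$-dependent localization scale versus the paper's fixed radius $R$ makes no difference. The one loose point is the asserted overlap size $\abs{\omega}\leq\ee^{-Bd_1^2+O(d_1)}$, which does not follow from \cref{eq:single_well_decay} alone (that gives only $\ee^{-Bd_1^2/2+O(d_1)}$), but it is not needed: $\omega$ enters only through terms of order $\abs{\omega}(\abs{\rho}+\ee^{-2Bd_1^2+O(d_1)})$, so $\omega\to0$ suffices, exactly as the paper uses $\sigma\to0$.
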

\begin{proof}
We follow the two-state reduction used in
\cite[Sections 3--4]{SIAM2022_FSW}, with $b,\lambda$ fixed and
$d_1\to\infty$. We write the off-diagonal matrix element as
$\tau=\ip{\varphi_L}{\br{H-e_0}\varphi_R}=\rho$.

First, the diagonal terms satisfy
\eql{\label{eq:LHL_ineq1}
 \abs{\ip{\varphi_L}{(H-e_0)\varphi_L}}
 &=\lambda^2\abs{\ip{\varphi_L}{v_R\varphi_L}}\nonumber\\
 &\leq C\left(\max_{x\in B_a(0)}\abs{\varphi_L(x+d)}\right)^2,
}
and likewise with $L,R$ exchanged.
Since $\abs{\varphi_L(x+d)}=\abs{\varphi(x+2d)}$,
\cref{eq:single_well_decay} gives, for all sufficiently large $d_1$,
\eql{\label{eq:LHL_ineq2}
 \max_{x\in B_a(0)}\abs{\varphi_L(x+d)}
 \leq C\ee^{-Bd_1^2+c_1d_1}.
}
Thus, writing $r_j=(H-e_0)\varphi_j$ for $j=L,R$, we have
\eq{
 \|r_j\|\leq C \ee^{-Bd_1^2+c_1d_1},\qquad
 \abs{\langle\varphi_j,r_j\rangle}
 \leq C \ee^{-2Bd_1^2+2c_1d_1}.
}
The overlap $\sigma=\langle\varphi_L,\varphi_R\rangle$ tends to zero,
since its modulus is bounded by the overlap of the ordinary translates
of $\abs{\varphi}\in L^2$.

For completeness, we show that the inverse needed in the reduction is bounded uniformly
in $d_1$. Let $P$ be the orthogonal projection onto
$\operatorname{span}\{\varphi_L,\varphi_R\}$ and $Q=1-P$.
All energy expressions below are understood as quadratic forms.

Put
\eq{
 K_B=\left(-\ii\nabla-\frac B2X^\perp\right)^2,
 \qquad h_j=K_B+\lambda^2v_j,\quad j=L,R.
}
The single-well ground state is simple and isolated. By magnetic
translation equivalence, there is a constant $\gamma>0$, independent of
$d_1$, such that
\eq{
 \langle f,(h_j-e_0)f\rangle
 \geq\gamma\left(\|f\|^2-\abs{\langle\varphi_j,f\rangle}^2\right),
 \qquad j=L,R.
}

Choose a localization radius $R>a$, to be fixed below, and assume
$d_1>2R$. Take real, smooth cutoff functions
$\chi_L,\chi_R,\chi_{\mathrm{ext}}$, with values in $[0,1]$, such that
\eq{
 \chi_L^2+\chi_R^2+\chi_{\mathrm{ext}}^2=1,
 \qquad
 \sum_{j\in\{L,R,\mathrm{ext}\}}\norm{\nabla\chi_j}^2
 \leq \frac{C}{R^2}.
}
Here $\chi_L=1$ on $B_R(-d)$ and is supported in $B_{2R}(-d)$,
while $\chi_R=1$ on $B_R(d)$ and is supported in $B_{2R}(d)$.
The constant $C$ can be chosen independently of $R$ and $d_1$.
The two well neighborhoods are disjoint, and
$\chi_{\mathrm{ext}}$ vanishes on the supports of both potentials.

For $\psi\in\operatorname{Ran}Q$ in the quadratic-form domain of $H$,
the magnetic IMS localization formula gives
\eq{
 \langle\psi,(H-e_0)\psi\rangle
 =
 \sum_{j\in\{L,R,\mathrm{ext}\}}
 \langle\chi_j\psi,(H-e_0)\chi_j\psi\rangle
 -
 \int_{\RR^2}
 \left(\sum_{j\in\{L,R,\mathrm{ext}\}}\norm{\nabla\chi_j}^2\right)
 \abs{\psi}^2\,dx.
}
This identity follows from
\eq{
 (-\ii\nabla-A)(\chi_j\psi)
 =\chi_j(-\ii\nabla-A)\psi-\ii(\nabla\chi_j)\psi,
 \qquad A(x)=\frac B2x^\perp,
}
and $\sum_j\chi_j\nabla\chi_j=0$.

Define
\eq{
 \epsilon_R=\|\mathbbm{1}_{\{\norm{x}\geq R\}}\varphi\|_{L^2},
 \qquad \epsilon_R\longrightarrow0\quad(R\to\infty).
}
Since $\psi$ is orthogonal to $\varphi_L$ and $\varphi_R$,
\eq{
 \abs{\langle\varphi_j,\chi_j\psi\rangle}
 =
 \abs{\langle(\chi_j-1)\varphi_j,\psi\rangle}
 \leq \epsilon_R\|\psi\|,
 \qquad j=L,R.
}
The tail bound is independent of $d_1$, because magnetic translations
preserve the modulus of the translated ground state.

On the support of $\chi_j$, the other well potential vanishes.
The single-well gap estimate therefore yields
\eq{
 \sum_{j=L,R}
 \langle\chi_j\psi,(H-e_0)\chi_j\psi\rangle
 \geq
 \gamma\sum_{j=L,R}\|\chi_j\psi\|^2
 -2\gamma\epsilon_R^2\|\psi\|^2.
}
On the support of $\chi_{\mathrm{ext}}$, both potentials vanish.
Since $K_B\geq0$ and $e_0<0$,
\eq{
 \langle\chi_{\mathrm{ext}}\psi,
 (H-e_0)\chi_{\mathrm{ext}}\psi\rangle
 \geq -e_0\|\chi_{\mathrm{ext}}\psi\|^2.
}
Consequently, with $g_0=\min\{\gamma,-e_0\}>0$,
\eq{
 \langle\psi,(H-e_0)\psi\rangle
 \geq
 \left(g_0-2\gamma\epsilon_R^2-\frac{C}{R^2}\right)\|\psi\|^2.
}
Fix $R$ sufficiently large that
$2\gamma\epsilon_R^2+C/R^2\leq g_0/2$, and put $g=g_0/2$.
For all sufficiently large $d_1$, this proves
\eq{
 Q(H-e_0)Q\geq gQ.
}
The compression is understood as the self-adjoint operator on
$\operatorname{Ran}Q$ associated with the restricted quadratic form.
Its resolvent therefore satisfies
\eq{
 \left\|\left[Q(H-e_0-\zeta)Q\right]^{-1}\right\|
 _{\operatorname{Ran}Q\to\operatorname{Ran}Q}
 \leq \frac{2}{g},
 \qquad \zeta\in\CC,\quad \abs{\zeta}\leq\frac g2.
}
All constants and the fixed localization radius may depend on
$b,\lambda,v$, but are independent of sufficiently large $d_1$.

The Schur complement on $\operatorname{Ran}P$, for energies within a fixed
neighborhood of $e_0$, therefore has a correction bounded by
$C(\|r_L\|^2+\|r_R\|^2)$.
Orthonormalizing the two localized states changes their matrix by
$O(\abs{\sigma}\,\abs{\tau}+\ee^{-2Bd_1^2+2c_1d_1})$.
The min--max principle places precisely the lowest two eigenvalues near
$e_0$. Consequently,
\eq{
 \Delta(d_1)=2\abs{\tau(d_1)}
 +O\!\left(\abs{\sigma}\,\abs{\tau(d_1)}+\ee^{-2Bd_1^2+2c_1d_1}\right).
}
Finally, \cref{thm:main_hopping} gives, outside $Z_\delta$,
\eq{
 \frac{\ee^{-2Bd_1^2+2c_1d_1}}{\abs{\rho(d_1)}}
 \leq C \ee^{-Bd_1^2+C_\delta d_1^{1+\delta}+2c_1d_1}\longrightarrow0.
}
Here the limit follows from $\delta<1$. Together with $\sigma\to0$,
this proves the proposition. Increasing $D$
and adjusting $C$ proves \cref{thm:main_eigenvalue_splitting}.
\end{proof}

\appendix

\section{Analytic Continuation of Tricomi's Confluent Hypergeometric Function}\label[appendix]{appendix:tricomy} 
In this section, we study the Tricomi's confluent hypergeometric function, defined by:
\eql{ \label{eq:tricomi_positive_z}
    U(a,1,z) := \frac{1}{\Gamma(a)}\int_0^\infty \ee^{-tz}t^{a-1}(1+t)^{-a}dt.
} where $a >0$ is fixed and $z \in \RR^{>0}.$
 In particular, we study its analytic continuation to $\CC - (-\infty,0]$ as used in \cref{subsec:holomorphicity_of_rho}. 

\begin{prop} Fix $a > 0.$ 
    $U(a, 1,z)$ can be analytically continued to $\CC\setminus(-\infty,0].$
\end{prop}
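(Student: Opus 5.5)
The plan is to continue $U(a,1,z)$ by rotating the integration ray in \eqref{eq:tricomi_positive_z}. For $\omega\in(-\pi/2,\pi/2)$ define
\eq{
 U_\omega(z):=\frac{1}{\Gamma(a)}\int_0^{\infty\ee^{\ii\omega}}\ee^{-tz}t^{a-1}(1+t)^{-a}\,dt,
}
where the integral is taken along the ray $t=r\ee^{\ii\omega}$, $r>0$, and $t^{a-1}$, $(1+t)^{-a}$ are principal branches. Both branches are holomorphic on $\CC\setminus(-\infty,0]$, and the ray with $\abs{\omega}<\pi/2$ never meets $(-\infty,0]$. The same holds for the shifted cut $(-\infty,-1]$ of $(1+t)^{-a}$. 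On the ray, $\abs{\ee^{-tz}}=\ee^{-r\operatorname{Re}(\ee^{\ii\omega}z)}$. Near $r=0$ the integrand is $O(r^{a-1})$, which is integrable since $a>0$, and for large $r$ it is $O(r^{-1})$ times this exponential.

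First, I would show that $U_\omega$ is holomorphic on the open half-plane
\eq{
 S_\omega=\{z:\operatorname{Re}(\ee^{\ii\omega}z)>0\}=\{z:\abs{\arg z+\omega}<\pi/2\}.
}
This follows from locally uniform domination together with Morera's theorem (or differentiation under the integral). On any compact subset of $S_\omega$, $\operatorname{Re}(\ee^{\ii\omega}z)\geq\delta>0$, so the integrand is dominated by $r^{a-1}(1+r)^{-a}\ee^{-\delta r}$ up to a constant. Here I use that $\abs{1+r\ee^{\ii\omega}}\geq c_\omega(1+r)$ for $\abs{\omega}<\pi/2$.

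Second, I would show that any two continuations agree on the overlap of their half-planes. For $\omega_1<\omega_2$ with $\omega_2-\omega_1<\pi$, take $z\in S_{\omega_1}\cap S_{\omega_2}$. Then $\operatorname{Re}(\ee^{\ii\omega}z)>0$ for every $\omega\in[\omega_1,\omega_2]$, because the set of such $\omega$ is an interval. Apply Cauchy's theorem to the sector bounded by the two rays and the arc $\abs{t}=\rho$, $\arg t\in[\omega_1,\omega_2]$. The integrand is holomorphic inside, away from $t=0$.
\begin{itemize}
\item On the large arc the integrand is $O(\rho^{-1}\ee^{-\delta\rho})$, with $\delta>0$ uniform by compactness of $[\omega_1,\omega_2]$. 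The arc has length $O(\rho)$, so its contribution tends to $0$ as $\rho\to\infty$.
\item On a small arc of radius $\epsilon$, the contribution is $O(\epsilon^{a})\to0$.
\end{itemize}
Hence $U_{\omega_1}=U_{\omega_2}$ on $S_{\omega_1}\cap S_{\omega_2}$. Also $U_0$ coincides with \eqref{eq:tricomi_positive_z} for $z>0$.

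Third, I would glue the pieces. The half-planes $S_\omega$, $\omega\in(-\pi/2,\pi/2)$, cover exactly $\{\abs{\arg z}<\pi\}=\CC\setminus(-\infty,0]$. Since the $U_\omega$ agree on all pairwise overlaps (each nonempty overlap is a convex, hence connected, set), they define a single holomorphic function on $\CC\setminus(-\infty,0]$. This function extends $U(a,1,\cdot)$ from $(0,\infty)$, which proves the proposition. Uniqueness follows from the identity theorem, since $\CC\setminus(-\infty,0]$ is connected.

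The main obstacle is the arc estimate in the second step. There one needs the exponential decay $\operatorname{Re}(\ee^{\ii\omega}z)\geq\delta$ uniformly in $\omega\in[\omega_1,\omega_2]$, and the bound $\abs{1+t}\gtrsim 1+\abs{t}$ away from the cut of $(1+t)^{-a}$. Both hold because every ray stays in the closed sector $\abs{\arg t}\leq\max(\abs{\omega_1},\abs{\omega_2})<\pi/2$. The same rotated-ray representation also yields the sector bound used later in \cref{prop:tricomi_sector_bound}.
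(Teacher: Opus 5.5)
Your proposal is correct and follows the same route as the paper. On each half-plane $\{\operatorname{Re}(\ee^{\ii\theta}z)>0\}$ you define the continuation by integrating along the ray $\arg t=\theta$, show that any two such definitions agree on overlaps by Cauchy's theorem on an indented sector (small arc $O(\epsilon^{a})$, large arc exponentially small), and glue over the cover of $\CC\setminus(-\infty,0]$. You also supply two details the paper only asserts: the locally uniform domination via $\abs{1+r\ee^{\ii\omega}}\gtrsim 1+r$, and why $\operatorname{Re}(\ee^{\ii\omega}z)>0$ holds on all of $[\omega_1,\omega_2]$. Using the open range $\abs{\omega}<\pi/2$ instead of the paper's closed range is harmless.
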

\begin{proof}
    We start with its formula for positive $z,$ given above in \eqref{eq:tricomi_positive_z}. Observe that we can use the same formula to define $U(a,1,z)$ for $z$ in the right half plane, $\{z \in \CC: \Re{z} >0\}$. Indeed, for $\Re{z} >0$, the integral converges absolutely because \eq{\abs{\ee^{-zt}t^{a -1 }(1+t)^{-a}} = \begin{cases}
        O(t^{a -1}) &\text{ as } t \to 0\\
        O(\ee^{-\Re{z} t}t^{-1}) &\text{ as } t \to \infty.
    \end{cases}}
    Now, we extend $U(a ,1,z)$ to $\CC-(-\infty,0].$
    To do this, we start by defining \eq{\HH_\theta := \{z \in \CC : \Re{z \ee^{\ii\theta}} > 0\} = \left\{z \in \CC : z \neq 0, \Arg{z} \in \left(-\frac{\pi}{2}-\theta, \frac{\pi}{2}-\theta\right)\right\} .}
   Since \eq{\CC-(-\infty, 0 ] = \bigcup_{ \theta \in [-\pi/2,\pi/2]} \HH_\theta,} it is enough for us to extend $U(a,1,z)$ on each $\HH_\theta$ in a consistent manner. That is, if we write $U(a,1,z) =: U_0(a,1,z)$  and $U_\theta(a,1,z)$ to be its extension on $\HH_\theta,$ we show that given $\theta_1\neq \theta_2,$ if $\HH_{\theta_1} \cap \HH_{\theta_2} \neq \emptyset,$ then for all $z \in \HH_{\theta_1} \cap \HH_{\theta_2},$ we have $U_{\theta_1}(a,1,z) = U_{\theta_2}(a,1,z).$ 

   For every $\theta \in [-\pi/2,\pi/2]$, we define \eq{U_\theta(a,1,z) := \lim_{R \to \infty } \frac{1}{\Gamma(a)} \int_0^{R\ee^{\ii\theta}} \ee^{-zt}t^{a-1}(1+t)^{-a} dt .} One can easily check that the integral converges absolutely and that $U_\theta$ is analytic on $\HH_\theta.$ The fact that $U_\theta$ is analytic continuation of $U(a,1,z) $ in  a consistent manner follows easily by elementary Cauchy integral argument on the integrand as a holomorphic function in $t$ for every fixed $z$. Indeed, one can consider an indented sector contour with two edges at angles $\theta_1$ and $\theta_2$ and will see that integral over the two arc parts goes to $0$ uniformly as $\varepsilon\to 0$ and $R \to \infty.$  
 \begin{figure}[ht]
\centering
\begin{tikzpicture}[scale=0.85,>=stealth]
  \draw[->,gray] (-0.4,0)--(4,0) node[right] {$\operatorname{Re}t$};
  \draw[->,gray] (0,-0.4)--(0,3.4) node[above] {$\operatorname{Im}t$};
  \draw[thick,->] (15:0.55)--(15:3.2);
  \draw[thick,->] (15:3.2) arc[start angle=15,end angle=65,radius=3.2];
  \draw[thick,->] (65:3.2)--(65:0.55);
  \draw[thick,->] (65:0.55) arc[start angle=65,end angle=15,radius=0.55];
  \node[right] at (15:2.2) {$\theta_1$};
  \node[left] at (65:2.2) {$\theta_2$};
  \node[above right] at (40:3.2) {$\abs{t}=R$};
  \node[above right] at (40:0.55) {$\varepsilon$};
\end{tikzpicture}
\caption{An indented sector contour, drawn schematically for $0<\theta_1<\theta_2<\pi/2$.}
\end{figure}

The powers in the rotated integral are taken on the branches obtained
from the positive real axis. The small arc contributes $O(\varepsilon^a)$.
For $z$ in a fixed compact subset of $\HH_{\theta_1}\cap \HH_{\theta_2}$,
there exists $c>0$ such that $\Re{ze^{i\theta}}\ge c$
uniformly for all $\theta$ between $\theta_1$ and $\theta_2$.
Thus, on the large arc $t=Re^{i\theta}$, we have
$\abs{e^{-zt}}\le e^{-cR}$, and the large-arc integral tends to zero
exponentially as $R\to\infty$.
Finally, $U(a,1,z)=-\Gamma(a)^{-1}\log z+O(1)$ as
$z\to0$ within a closed subsector of the slit plane, so this continuation
does not extend to an entire function.
\end{proof}
\begin{prop}[A bound on the analytic continuation]
\label{prop:tricomi_sector_bound}
Fix $a>0$ and $\eta\in(0,\pi)$. On the principal branch,
\eq{
 \abs{U(a,1,z)}
 \leq \frac{\abs{z}^{-a}}{(\sin(\eta/2))^a},
 \qquad z\neq0,\quad \abs{\arg z}\leq\pi-\eta.
}
\end{prop}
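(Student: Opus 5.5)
We will prove the bound by rotating the contour of integration in the integral representation. Fix $z$ with $\abs{\arg z}\leq\pi-\eta$ and write $\phi=\arg z$. By the preceding proposition, for any admissible ray angle $\theta\in[-\pi/2,\pi/2]$ with $z\in\HH_\theta$ we have
\[
 U(a,1,z)=\frac{1}{\Gamma(a)}\int_0^{\infty\ee^{\ii\theta}}\ee^{-zt}t^{a-1}(1+t)^{-a}\,\dif t .
\]
We choose the ray so that $zt$ is real and positive along it, i.e.\ $\theta=-\phi$. This requires $\abs{\phi}\leq\pi/2$. For $\pi/2<\abs{\phi}\leq\pi-\eta$ we instead take $\theta=\mp\pi/2$ and first check that $z\in\HH_\theta$; in that range we track a residual angle. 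The cleanest uniform choice is to parametrize $t=s\ee^{-\ii\phi}$ directly. This is legitimate as long as the rotated ray avoids the branch point $t=-1$, and the argument of $\ee^{-\ii\phi}$ stays in $(-\pi,\pi)$. Both hold since $\abs{\phi}<\pi$. The Cauchy argument of the appendix extends verbatim to any ray angle in $(-\pi,\pi)$ for which $\Re(zt)>0$ along the ray. With this substitution,
\[
 U(a,1,z)=\frac{\ee^{-\ii a\phi}}{\Gamma(a)}\int_0^\infty \ee^{-\abs{z}s}s^{a-1}\left(1+s\ee^{-\ii\phi}\right)^{-a}\,\dif s ,
\]
with principal powers, consistent with the branches used before.

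The key step is a lower bound on $\abs{1+s\ee^{-\ii\phi}}$ for $s\geq0$. The distance from $-1$ to the ray $\{s\ee^{-\ii\phi}\}$ combined with the value $1$ at $s=0$ gives
\[
 \abs{1+s\ee^{-\ii\phi}}\geq\max\{\sin(\pi-\abs{\phi}),\,\cdot\,\}\ \text{when }\abs{\phi}>\pi/2,
\]
and $\geq1$ otherwise. We want the stronger bound
\[
 \abs{1+s\ee^{-\ii\phi}}\geq(1+s)\sin(\eta/2).
\]
It follows from the law of cosines:
\[
 \abs{1+s\ee^{-\ii\phi}}^2=1+s^2+2s\cos\phi=(1+s)^2-4s\sin^2(\phi/2)\geq(1+s)^2\cos^2(\phi/2),
\]
using $4s\leq(1+s)^2$. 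Moreover $\cos(\abs{\phi}/2)\geq\cos(\pi/2-\eta/2)=\sin(\eta/2)$.

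Inserting this bound gives
\[
 \abs{U(a,1,z)}\leq\frac{(\sin(\eta/2))^{-a}}{\Gamma(a)}\int_0^\infty\ee^{-\abs{z}s}s^{a-1}(1+s)^{-a}\,\dif s .
\]
Dropping $(1+s)^{-a}\leq1$ and evaluating the Gamma integral yields $\abs{z}^{-a}(\sin(\eta/2))^{-a}$, as claimed.

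The main obstacle is justifying the contour rotation to angles $\abs{\theta}>\pi/2$. The appendix only constructs $U_\theta$ for $\theta\in[-\pi/2,\pi/2]$. I would handle this by repeating the indented-sector argument between the ray angle $0$ (or $\pm\pi/2$) and $-\phi$. Along every intermediate ray angle $\psi$ between $0$ and $-\phi$, $\Re(z\ee^{\ii\psi})=\abs{z}\cos(\phi+\psi)\geq0$. Strict positivity holds away from the endpoint, and positivity persists on a compact neighborhood by a slight perturbation. Moreover the sector swept avoids $t=-1$ because $\abs{\phi}<\pi$. The large arc therefore vanishes. Analyticity in $z$ then identifies the result with the principal-branch continuation.
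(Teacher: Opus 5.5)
Your proof is correct, but it takes a different route from the paper. The paper rotates only by the half-angle $\theta=-\vartheta/2$, where $\vartheta=\arg z$. Then $\abs{\theta}\le(\pi-\eta)/2<\pi/2$, so the rotated representation already built in the appendix applies with no new contour argument. Also $\cos\theta>0$ gives the trivial bound $\abs{1+r\ee^{\ii\theta}}\ge1$, and the factor $\sin(\eta/2)$ enters through the reduced decay rate $\operatorname{Re}(z\ee^{\ii\theta})=\abs{z}\cos(\vartheta/2)$. You rotate all the way to $t=s\ee^{-\ii\phi}$, which keeps the full decay $\ee^{-\abs{z}s}$. The same factor then comes from the algebraic term via $\abs{1+s\ee^{-\ii\phi}}\ge(1+s)\cos(\phi/2)$, which you verify correctly. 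This buys the slightly sharper comparison $\abs{U(a,1,z)}\le(\cos(\phi/2))^{-a}U(a,1,\abs{z})$ with the value on the positive axis; for instance, it keeps the logarithmic behavior as $z\to0$. The cost is that you must rotate the contour beyond $\pi/2$. There your justification needs one correction. For $\abs{\phi}>\pi/2$ you cannot start the rotation from the ray angle $0$: there $\operatorname{Re}z<0$, so $\cos(\phi+\psi)<0$ for $\psi$ near $0$, and the real-axis integral does not even define $U$ at such $z$. Start instead from the paper's ray $-\phi/2$, or from $-\operatorname{sgn}(\phi)\,\pi/2$. Then every intermediate angle $\psi$ satisfies $\abs{\phi+\psi}\le\abs{\phi}/2<\pi/2$ (resp.\ $\le\abs{\phi}-\pi/2<\pi/2$), and the swept sector stays in $\{\abs{\arg t}<\pi\}$. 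The large arc therefore vanishes uniformly, and no perturbation argument is needed.
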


\begin{proof}
Write $z=\abs{z}\ee^{\ii\vartheta}$ and choose
$\theta=-\vartheta/2$. Then $\abs{\theta}<\pi/2$, and the
rotated-contour representation established above gives
\eq{
 U(a,1,z)
 =
 \frac{\ee^{\ii a\theta}}{\Gamma(a)}
 \int_0^\infty
 \ee^{-zr \ee^{\ii\theta}}r^{a-1}
 (1+r\ee^{\ii\theta})^{-a}\,dr.
}
Since $\abs{\vartheta}\leq\pi-\eta$,
\eq{
 \operatorname{Re}(z\ee^{\ii\theta})
 =\abs{z}\cos(\vartheta/2)
 \geq \abs{z}\sin(\eta/2).
}
Moreover, $\cos\theta\geq0$, so
\eq{
 \abs{1+r\ee^{\ii\theta}}^2
 =1+2r\cos\theta+r^2\geq1.
}
Taking absolute values and evaluating the resulting Gamma integral,
we obtain
\eq{
 \abs{U(a,1,z)}
 \leq
 \frac1{\Gamma(a)}
 \int_0^\infty
 \ee^{-\abs{z}\sin(\eta/2)r}r^{a-1}\,dr
 =
 \frac{\abs{z}^{-a}}{(\sin(\eta/2))^a}.
}
\end{proof}

\newpage
\printbibliography
	\end{document}